\newcommand*\rectangled[1]{%
      \tikz[baseline=(R.base)]\node[draw,rectangle,inner sep=0.9pt](R) {#1};\!
}
\newcommand*\rund[1]{%
      \tikz[baseline=(R.base)]\node[draw,circle,inner sep=0.5pt](R) {#1};\!
}
\newtheorem{theorem}{Theorem}
\newtheorem{proposition}[theorem]{Proposition}
\theoremstyle{definition}
\newtheorem{definition}[theorem]{Definition}
\newtheorem{remark}[theorem]{Remark}
\newtheorem{example}{Example}
\title{Considerate Ramp Secret Sharing}
\author{Olav Geil
}
\affil{Department of Mathematical Sciences\\ Aalborg University}
\begin{document}
\maketitle

\begin{abstract}
In this work we revisit the fundamental findings by Chen et al.\ in \cite{Chen} on general information transfer in linear ramp secret sharing schemes to conclude that their method 
not only gives a way to establish worst case leakage~\cite{Chen,kurihara} and best case recovery~\cite{Chen,geil2014relative}, but can also lead to additional insight on non-qualifying sets for any prescribed amount of information. We then apply this insight to schemes defined from monomial-Cartesian codes and by doing so we demonstrate that the good schemes from~\cite[Sec.\ IV]{MR3782266} have a second layer of security. Elaborating further, when given designed partial recovery numbers, in a new construction the focus is entirely on ensuring that the access structure possesses desirable second layer security, rather on what is the worst case information leakage in terms of {\textit{number}} of participants. The particular structure of largest possible sets being not able to determine any amount of information suggests that we coin the concept of considerate ramp secret sharing schemes of which the proposed new construction is a well-structured example. \\

\noindent {\textbf{Keywords:}} Access structure, considerate secret sharing, monomial-Cartesian codes, ramp secret sharing scheme, relative generalized Hamming weight\\

\noindent {\textbf{2020 MSC: }}11T06, 11T71
\end{abstract}
\section{Introduction}\label{secintro}

A secret sharing scheme is a cryptographic method for sharing a secret among a group of participants in such a way that only certain subsets can gain full information on it. The data provided by the dealer to a participant is referred to as a share and sets of participants which are able to recover the secret by pooling their data are said to be authorized or qualified. The concept was introduced by Shamir in his seminal paper~\cite{shamir1979share} where he describes what is now known as Shamir's secret sharing scheme. His scheme which is based on univariate polynomials is perfect meaning that subsets that are not authorized possess no information on the secret. Furthermore,  Shamir's scheme is a thresshold scheme, meaning it is the size of a set of shares which indicates if it is authorized or not. 

The concept has been generalized in a variety of directions leading to much more complex constructions than the one by Shamir, for some examples see~\cite{ito1989secret,stinson1992explication,massey1993minimal,democratic1995,stinson1998bibliography,cramer2015secure,ccalkavur2022code,krenn2023introduction}. 
Here, we shall concentrate on linear schemes which can be characterized by having the property that the set of secrets equals ${\mathbb{F}}_q^\ell$ (i.e. consists of vectors of length $\ell$ over the finite field ${\mathbb{F}}_q$) whereas the shares provided to the $n$ participants  $\{ 1, \ldots , n\}$ each belongs to ${\mathbb{F}}_q$ in such a way that for two given secrets and corresponding sets of shares a linear combination of the sets of shares is an allowed set of shares for the same linear combination of the secrets. As we shall recall later in the paper, a linear scheme has the advantage that an authorized set can fastly recover the secret by means of simple linear algebra. Shamir's scheme is an example of a linear scheme where the size of the secret is the same as that of the shares, i.e.\ $\ell=1$. Schemes for which $\ell>1$ are said to be ramp, but sometimes the name is used for the entire set of linear schemes to emphasize that $\ell$ may not need to equal $1$. As we shall recall later in the paper linear schemes with $\ell=1$ are precisely those linear schemes which are perfect. Some of the earliest examples of (non-perfect) ramp secret sharing schemes were described by Blakley and Meadows in~\cite{Blakley} and by Yamamoto in~\cite{Yamamoto}, including the natural extension of thresshold schemes to a ramp version where the size of a set of participants indicates how much information it holds. Some more recent results on ramp schemes include \cite{kurosawa1994nonperfect,iwamoto2004general,iwamoto2006strongly,Chen,duursma2008algebraic,kurihara,geil2014relative,geil2017asymptotically,martinez2018communication,eriguchi2020linear}. Ramp secret sharing with $\ell >1$ is of particular interest in connection with storage of bulk data~\cite{Csirmaz} and in connection with secure multiparty computation~\cite{Chen}.

As is well-known, for ramp schemes the information held by a set of participants is discretized in the following way. Pooling their given shares a solution space $\Gamma = \vec{p}+V \subseteq {\mathbb{F}}_q^\ell$, where $V$ is a vectorspace, can be calculated the elements of $\Gamma$ being all possible secrets producing their given shares, and no vectors outside $\Gamma$ being in accordance with the shares. One say that the set of participants have $\ell - \dim V$ times $\log_{2}(q)$ bits of information. Linearity ensures that the information held by a given set of participants is a fixed number, i.e. is independent of the given secret. Hence, for $i=0, 1, \ldots , \ell$ we define $A_i$ to be the sets of participants holding $i$ times $\log_2(q)$ bits of information, but not $i+1$ times  $\log_2(q)$ bits~\cite[Def.\ 1]{iwamoto2006strongly}. Full information on  a given secret sharing scheme is equivalent to knowing $\{A_0, A_1, \ldots , A_\ell \}$ which we shall call the access structure~\cite{iwamoto2006strongly}. Except for very simple cases, the task of determining the entire access structure is a very difficult. Therefore, one often only considers the following key parameters of a secret sharing scheme, namely the privacy number $t$ and the recovery number $r$. Here, $t$ is the largest number such that no set of $t$ participants can recover any information on the secret and where $r$ is smallest possible such that any set of $r$ participants can recover the secret in full. A more refined description of the scheme is given by the parameters $t=t_1, \ldots , t_\ell$ and $r_1, \ldots , r_\ell=r$, respectively, related in a similar way to partial leakage and recovery, respectively~\cite{kurihara,geil2014relative}.

A fundamental description of linear ramp schemes where provided by Chen et al.\ in~\cite{Chen}. Here, it is shown that there is a one-to-one correspondence between these structures and sets of nested linear codes, \cite[Subsec.\ 4.2]{Chen}. Moreover, by combining~\cite[Th.\ 10]{Chen} with fundamental results by Forney~\cite{forney94} one obtains a description of the $r_1, \ldots r_\ell$ and $ t_1, \ldots , t_\ell$, respectively, in terms of relative generalized Hamming weights of the nested codes and their duals, respectively~\cite{Chen,Bains,kurihara,geil2014relative}. 

In the present paper we slightly reformulate the basic result in~\cite[Th.\ 10]{Chen} on information transfer and thereby obtains, what we believe is, a more direct way to establish further information on the access structure. To the best of our knowledge such general insight has not been employed in the literature although it seems fair to assume that it is known by more researchers in the area. By definition the maximal size of a set of participants not belonging to $A_i$ equals $r_i-1$ which may be significantly larger than $t_i$. We call such sets maximum non-$i$-qualifying ($i=1, \ldots, \ell$) and note that when they have a systematic structure we may have a way of avoiding leakage of $i$ times $\log_2(q)$ bits of information also if much more than $t_i$ participants are allowed to pool their shares. Employing our reformulation of~\cite[Th.\ 10]{Chen} we establish the systematic structure of maximum non-i-qualifying sets for a family of good secret sharing schemes based on monomial-Cartesian codes~\cite[Sec.\ IV]{MR3782266} giving rise to a second layer of security. Inspired by this analysis we next introduce the novel construction of monomial-Cartesian ramp schemes of type C where the focus is entirely on establishing schemes having maximum non-$i$-qualifying sets of a certain systematic structure which may be of interest in practical applications where one does not want to allow for ``systematic discrimination'' of the participants. Motivated by this construction we coin the larger novel concept of considerate ramp secret sharing which are schemes that protect against systematic discrimination in connection with recovery of smallest possible amount of non-zero information. The paper also contains, what we believe is, a simpler and more direct proof than can be found in the literature of the relationship between the numbers $t_1, \ldots , t_\ell, r_1, \ldots , r_\ell$ and the relative generalized Hamming weights of the nested codes and of their duals. In contrast to the literature we avoid the use of the relative distance length profile (RDLP) as well as the concept of mutual information.

The paper is organized as follows. In Section~\ref{secsss} we treat general linear ramp secret sharing schemes and provide a self-contained proof of the mentioned reformulation of~\cite[Th.\ 10 ]{Chen}. Then in Section~\ref{secpolsss} we treat the schemes from~\cite[Sec.\ IV]{MR3782266} establishing the systematic structure of maximum non-i-qualifying sets, and next in Section~\ref{secdem} we introduce monomial-Cartesian ramp schemes of type C and coin the concept of considerate ramp secret sharing. Section~\ref{secconcl} is the conclusion where we propose further research. Finally, Appendix~\ref{secproofof} contains a simplified proof of the role of relative generalized Hamming weights in connection with ramp secret sharing. 

\section{Linear ramp secret sharing schemes}\label{secsss}

In~\cite{Chen}[Sec.\ 4.2] Chen et al. presented what they call ``a more fruitful approach'' to linear ramp secret sharing schemes, namely the below coset construction: 
\begin{theorem}
The following description captures the entire set of linear ramp secret sharing schemes.
Consider a set of nested linear codes $C_2\subseteq C_1 \subseteq {\mathbb{F}}_q^n$ of codimension $\ell$. Let $\{\vec{b}_1, \ldots , \vec{b}_{k_2}\}$ be a basis for $C_2$ and $\{\vec{b}_1, \ldots , b_{k_2}, \vec{b}_{k_2+1}, \ldots , \vec{b}_{k_1=k_2+\ell}\}$ a basis for $C_1$. A secret $\vec{s} =(s_1, \ldots , s_\ell) \in {\mathbb{F}}_q^\ell$ is encoded to $\vec{c}=(c_1, \ldots , c_n)= a_1 \vec{b}_1+\cdots +a_{k_2}\vec{b}_{k_2}+ s_1\vec{b}_{k_2+1}+\cdots +s_\ell \vec{b}_{k_1}$ where $(a_1,\ldots , a_{k_2})$ is chosen uniformly at random from ${\mathbb{F}}_q^{k_2}$, and $c_i$, $i=1, \ldots , n$ are used as shares, $c_i$ being given to participant $i$ from the set of participants ${\mathcal{I}}=\{1, \ldots , n\}$. 
\end{theorem}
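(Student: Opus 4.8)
The statement has two halves: every instance of the coset construction is a linear ramp scheme, and conversely every linear ramp scheme arises this way. I would organize the proof around the set of admissible pairs
\[
W = \{(\vec{s},\vec{c}) : \vec{c}\ \text{is an admissible share vector for the secret}\ \vec{s}\}\subseteq \mathbb{F}_q^\ell\times\mathbb{F}_q^n,
\]
which the linearity axiom (a linear combination of two admissible share sets is admissible for the same linear combination of secrets) forces to be an $\mathbb{F}_q$-subspace. For the forward direction I would just verify the axioms from the defining expression for $\vec{c}$: the secret space is $\mathbb{F}_q^\ell$, each $c_i\in\mathbb{F}_q$, and if $\vec{c},\vec{c}'$ encode $\vec{s},\vec{s}'$ with randomness $\vec{a},\vec{a}'$ then reading off coefficients in the fixed basis shows $\alpha\vec{c}+\beta\vec{c}'$ is precisely the encoding of $\alpha\vec{s}+\beta\vec{s}'$ with randomness $\alpha\vec{a}+\beta\vec{a}'$. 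Every secret is achievable, and since $\vec{a}\mapsto a_1\vec{b}_1+\cdots+a_{k_2}\vec{b}_{k_2}$ is a bijection onto $C_2$, the uniform choice of $\vec{a}$ induces the uniform distribution on the coset $s_1\vec{b}_{k_2+1}+\cdots+s_\ell\vec{b}_{k_1}+C_2$.

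The substance is the backward direction. Given a linear scheme, I set $C_1=\{\vec{c} : (\vec{s},\vec{c})\in W\ \text{for some}\ \vec{s}\}$, the projection of $W$ onto the share coordinates, and $C_2=\{\vec{c} : (\vec{0},\vec{c})\in W\}$, the admissible share vectors for the zero secret; both are subspaces and $C_2\subseteq C_1$. Writing $\pi_s\colon W\to\mathbb{F}_q^\ell$ and $\pi_c\colon W\to C_1$ for the two projections, surjectivity of $\pi_s$ (every secret is achievable) together with $\ker(\pi_s|_W)=\{\vec{0}\}\times C_2$ gives $\dim W=\ell+\dim C_2$, while $\pi_c$ is surjective and, because the full participant set recovers the secret, injective, so $\dim W=\dim C_1$. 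Comparing the two counts yields $\dim C_1-\dim C_2=\ell$, i.e.\ the claimed codimension.

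To finish, for each $j$ I would take $\vec{b}_{k_2+j}$ to be any admissible share vector for the unit secret $\vec{e}_j$; together with a basis $\vec{b}_1,\ldots,\vec{b}_{k_2}$ of $C_2$ these form a basis of $C_1$, and by linearity the fibre of $W$ over an arbitrary $\vec{s}$ is exactly the coset $s_1\vec{b}_{k_2+1}+\cdots+s_\ell\vec{b}_{k_1}+C_2$, with the scheme's (uniform) sampling on that fibre reproduced by uniform $\vec{a}$ as above. The essential point, and the one I would watch most carefully, is the matching of the labelling: the identification $C_1/C_2\cong\mathbb{F}_q^\ell$ must send the coset the original scheme assigns to $\vec{e}_j$ to the $j$-th coordinate, which is precisely why the $\vec{b}_{k_2+j}$ are chosen as admissible encodings of $\vec{e}_j$ rather than arbitrary coset representatives. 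The only other place requiring care is the injectivity of $\pi_c|_W$ — equivalently, that the full set $\mathcal{I}$ is qualifying — which I must invoke from well-posedness of the scheme; once it is in hand, the dimension bookkeeping and the basis construction are routine.
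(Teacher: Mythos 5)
The paper does not actually prove this theorem: it is recalled verbatim from Chen et al.\ \cite[Subsec.~4.2]{Chen} and used as a black box, so there is no in-paper argument to compare yours against. Taken on its own terms, your proof is sound and essentially the standard one. The forward direction is routine verification, and your backward direction --- building the subspace $W$ of admissible (secret, share-vector) pairs, setting $C_1=\pi_c(W)$ and $C_2$ equal to the fibre over $\vec{0}$, and then counting $\dim W=\ell+\dim C_2=\dim C_1$ via surjectivity of $\pi_s$ and injectivity of $\pi_c$ --- is exactly the right skeleton; your choice of $\vec{b}_{k_2+j}$ as an admissible encoding of the unit secret $\vec{e}_j$ correctly pins down the identification $C_1/C_2\cong\mathbb{F}_q^\ell$, and your check that the fibre over $\vec{s}$ is precisely the coset $s_1\vec{b}_{k_2+1}+\cdots+s_\ell\vec{b}_{k_1}+C_2$ follows from linearity in both inclusions. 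Two points deserve to be made explicit rather than parenthetical. First, injectivity of $\pi_c|_W$ is not a consequence of linearity alone; it is the well-posedness requirement that the full participant set be qualifying, and you should state it as a hypothesis on the scheme (you do flag this, correctly). Second, the paper's definition of linearity in Section~\ref{secintro} constrains only the \emph{support} of the conditional share distribution to be a coset of $C_2$, not the distribution itself; to conclude that uniform $(a_1,\ldots,a_{k_2})$ reproduces the scheme exactly (and hence that Theorem~\ref{theuncertain} applies with its information-theoretic reading) you need the additional standing convention that shares are sampled uniformly from the admissible coset. Neither point is a gap in substance, but both should be surfaced as hypotheses rather than left implicit.
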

Below we recall~\cite{Chen}[Th.\ 10] which provides us with an exact measure for the uncertainty of the secret given any set of shares. This theorem uses the notion of the projection of a code $C \subseteq {\mathbb{F}}_q^n$ onto $A=\{i_1 < \cdots <i_{\# A}\} \subseteq {\mathcal{I}}=\{1, \ldots , n\}$ which is defined by ${\mathcal{P}}_A\big( (c_1, \ldots , c_n) \big)=(p_1, \ldots , p_n)$, where $p_i=c_i$ whenever $i \in A$ and $p_i=0$ otherwise, and where ${\mathcal{P}}_A(C)=\{\mathcal{P}_A(\vec{c}) \mid \vec{c} \in C\}$.

\begin{theorem}\label{theuncertain}
For a set of participants $A \subseteq {\mathcal{I}}=\{1, \ldots , n\}$ the uncertainty of the secret equals 
\begin{equation*}
\ell - \dim {\mathcal{P}}_A (C_1)+ \dim {\mathcal{P}}_A (C_2).
\end{equation*}
Here, an uncertainty of $a$ means that the set $A$ of participants holds $\ell-a$ times $\log_2(q)$ bits of information.
\end{theorem}

Given a linear code $C$ and $A\subseteq {\mathcal{I}}=\{1, \ldots , n\}$ recall the notation $C_A=\{\vec{c} \in C \mid c_i=0 {\mbox{ for all }} i \in \bar{A}\}$ where $\bar{A}={\mathcal{I}}\backslash A$. 
Then by applying Forney's first duality lemma~\cite{forney94}[Lem.\ 1], i.e. the result: 
\begin{equation}
\dim C = \dim C_{\bar{A}} + \dim {\mathcal{P}}_{A} (C), \label{eqfn}
\end{equation}
one can immediately translate
Theorem~\ref{theuncertain} into Theorem~\ref{theuncle} below which in our opinion is more operational when trying to detect which particular groups of participants can recover how much. This theorem uses the notion of the support ${\mbox{Supp}}(D)$ of a vector space $D \subseteq {\mathbb{F}}_q^n$ being the indexes for which the corresponding entry of at least one word in $D$ is non-zero. For the sake of self containment we provide a direct proof.

\begin{theorem}\label{theuncle}
Let $A=\{i_1 < \cdots < i_m\} \subseteq {\mathcal{I}}$. Assume $c_{i_1}^\prime \ldots , c_{i_m}^\prime$ is a set of realizable shares in those positions. I.e. there exists at least one word $\vec{c}=(c_1, \ldots , c_n)=
a_1 \vec{b}_1+\cdots +a_{k_2} \vec{b}_{k_2}+s_1 \vec{b}_{k_2+1}+ \cdots +s_\ell \vec{b}_{k_2=k_1+\ell}$, such that  $c_{i_1}=c^\prime_{i_1}, \ldots , c_{i_m}=c^\prime_{i_m}$. The amount of possible secrets $(s_1, \ldots , s_\ell)$ as above equals $q^s$ where
\begin{eqnarray}
s&=&\dim (C_1)_{\bar{A}}-\dim (C_2)_{\bar{A}} \nonumber \\
&=&\max \{ \dim D \mid D \subseteq C_1, D \cap C_2 =\{\vec{0}\}, {\mbox{Supp}}(D) \subseteq \bar{A} \} \label{eqmaxD}
\end{eqnarray}
\end{theorem}

\begin{proof}
Consider the generator matrix
$$G=\left[ \begin{array}{c} 
\vec{b}_1 \\
\vdots \\
\vec{b}_{k_2} \\
\vec{b}_{k_2+1} \\
\vdots \\ 
\vec{b}_{k_1=k_2+\ell}
\end{array} \right]$$
and let $G^\prime$ be the submatrix consisting of columns $i_1, \ldots , i_m$. Let $\vec{p} \in {\mathbb{F}}_q^{k_1}$ be a particular solution to $\vec{m} G^\prime = (c_{i_1}^\prime, \ldots , c_{i_m}^\prime)$. Then the set of possible vectors $\vec{m}$ such that $\vec{m}G^\prime = (c_{i_1}^\prime, \ldots , c_{i_m}^\prime)$ equals $\vec{p}+V$ where $V$ is the (left) kernel of $G^\prime$, i.e. $V$ is isomorphic to ${(C_1)}_{\bar{A}}$. However, we are only interested in the space consisting of the restriction of $\vec{m}=(\vec{a},\vec{s})$ to the last $\ell$ coordinates, or in other words to calculate $s$ from $\dim ({C_1})_{\bar{A}}$ we should subtract $\dim {(C_2)}_{\bar{A}}$.  
\end{proof}

Expression~(\ref{eqmaxD}) is particular well-suited for investigations in connection with schemes defined from polynomial algebras (Riemann-Roch spaces in connection with algebraic curves, polynomial rings in several variables, etc.). 
\begin{example}
Assume $\{\alpha_1, \ldots , \alpha_n\}\subseteq {\mathbb{F}}_q^m$ and let ${\mbox{ev}}: {\mathbb{F}}_q [ X_1, \ldots , X_m ]\rightarrow {\mathbb{F}}_q^n$ be given by ${\mbox{ev}}(F)=(F(\alpha_1), \ldots , F(\alpha_n))$. Further, write $C_2={\mbox{Span}}_{\mathbb{F}_q} \{ {\mbox{ev}}(M_1), \ldots , {\mbox{ev}}(M_{k_2})\}$ and  $C_1={\mbox{Span}}_{\mathbb{F}_q} \{ {\mbox{ev}}(M_1), \ldots , {\mbox{ev}}(M_{k_1})\}$, where the $M_i$'s are monomials with $M_a \prec M_b$ for $a <b$ and where $\prec$ is a fixed monomial ordering, and where we assume $\dim C_2=k_2 < \dim C_1=k_1$. Then for a given $A=\{i_1 < \cdots < i_x \}$ by~(\ref{eqmaxD}) the information held by the set of corresponding participants equal $\ell$ minus the maximal number of polynomials having different leading monomials from $\{M_{k_2+1}, \ldots , M_{k_1}\}$ with support in $\{M_1, \ldots , M_{k_1}\}$ and having $\alpha_{i_1}, \ldots , \alpha_{i_x}$ as common roots. 
\end{example}
In the next two sections we shall pursue this particular example and elaborate further on it, and by using simple methods from the theory of multivariate polynomials obtain interesting results. 

\begin{remark}
A set of participants $i_1< \cdots <i_m$ can determine all possible secrets in accordance with their shares by first determining $\{ \vec{m} \mid \vec{m} G^\prime =(c_{i_1}^\prime, \ldots , c_{i_m}^\prime)\}$ using simple linear algebra. Then they disregard the first $k_2$ coordinates of each vector, and finally remove duplicates. In particular it is clear that information quantifies in integers times $\log_2(q)$ bits, and that perfect linear ramp secret sharing schemes are exactly those with $\ell=1$.
\end{remark}

From Theorem~\ref{theuncertain} it was concluded in~\cite{Chen}[Cor.\ 4] that full privacy is ensured when $\# A < d (C_2^\perp)$ and that full recovery is guaranteed when $\# A > n - d (C_1)$. I.e.\ $t \geq d(C_2^\perp)-1$ and $r\geq n - d (C_1)+1$. Recall from~\cite{geil2014relative}[Def.\ 2] the following more refined parameters

\begin{definition}
A ramp secret sharing scheme is said to have $(t_1, \ldots , t_\ell)$-privacy and $(r_1, \ldots , r_\ell)$-reconstruction if for $m=1, \ldots , \ell$ $t_m$ is largest possible and $r_m$ is smallest possible such that
\begin{itemize}
\item no set of $t_m$ participants can recover $m$ times $\log_2(q)$ bits of information about $\vec{s}$
\item any set of $r_m$ participants can recover $m$ times $\log_2(q)$ bits of information about $\vec{s}$.
\end{itemize}
Clearly, $t=t_1$ and $r=r_\ell$.
\end{definition}

Starting in~\cite{Chen,Bains,MR2588125,kurihara} and concluding with~\cite{geil2014relative} such parameters were exactly described in terms of relative generalized Hamming weights, the proofs taking the departure in Theorem~\ref{theuncertain}, i.e.\ ~\cite{Chen}[Th.\ 10], and involving the concept of the relative dimension/length profile (RDLP) as well as that of mutual information.

\begin{definition}
For a set of nested linear codes $C_2\subseteq C_1 \subseteq {\mathbb{F}}_q^n$, and for $t=1, \ldots , \ell=\dim C_1 - \dim C_2$ the $t$th relative generalized Hamming weight is 
$$M_t(C_1,C_2)=\min \{ \# {\mbox{Supp}}(D) \mid D \subseteq C_1, D \cap C_2 = \{\vec{0}\}, \dim D= t\}.$$
\end{definition}

\begin{theorem}\label{thegen}
Consider a linear ramp secret sharing scheme defined from nested codes $C_2\subseteq C_1 \subseteq {\mathbb{F}}_q^n$ of codimension $\ell$. The privacy and reconstruction numbers satisfy
$t_m=M_m(C_2^\perp, C_1^\perp)-1$, $r_m=n-M_{\ell-m+1}(C_1,C_2)+1$ for $m=1, \ldots , \ell$.
\end{theorem}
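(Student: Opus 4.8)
The plan is to treat the reconstruction numbers $r_m$ and the privacy numbers $t_m$ separately, extracting $r_m$ directly from Theorem~\ref{theuncle} applied to the pair $(C_1,C_2)$, and obtaining $t_m$ by first rewriting the information held by a set in terms of the dual pair $(C_2^\perp,C_1^\perp)$. Throughout I write $\mathrm{info}(A)=\ell-s$ for the amount of $q$-bits held by $A$, where $s$ is the uncertainty of Theorem~\ref{theuncle}, and I will repeatedly use that the relative generalized Hamming weight $M_t$ is by definition a minimum of $\#{\mbox{Supp}}(D)$ over admissible spaces $D$, so that this minimum is attained by some $D^\ast$.

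For reconstruction, fix $m$ and note that, via~(\ref{eqmaxD}), $\mathrm{info}(A)\le m-1$ is equivalent to the existence of a space $D\subseteq C_1$ with $D\cap C_2=\{\vec{0}\}$, $\dim D=\ell-m+1$ and ${\mbox{Supp}}(D)\subseteq\bar{A}$ (passing to a subspace of dimension exactly $\ell-m+1$ when $s$ is larger). Since ${\mbox{Supp}}(D)\subseteq\bar{A}$ is the same as $A\subseteq\overline{{\mbox{Supp}}(D)}$, this forces $\#A\le n-\#{\mbox{Supp}}(D)\le n-M_{\ell-m+1}(C_1,C_2)$. Hence every $A$ with $\#A>n-M_{\ell-m+1}(C_1,C_2)$ has $\mathrm{info}(A)\ge m$, while choosing $A=\overline{{\mbox{Supp}}(D^\ast)}$ for a minimiser $D^\ast$ produces a set of size exactly $n-M_{\ell-m+1}(C_1,C_2)$ with $\mathrm{info}(A)\le m-1$. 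These two facts pin down $r_m=n-M_{\ell-m+1}(C_1,C_2)+1$.

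For privacy the first step is to re-express the information. By Theorem~\ref{theuncertain} we have $\mathrm{info}(A)=\dim{\mathcal{P}}_A(C_1)-\dim{\mathcal{P}}_A(C_2)$, and applying Forney's duality lemma~(\ref{eqfn}) to the dual codes --- equivalently, using the puncture--shorten identity $\dim{\mathcal{P}}_A(C)=\#A-\dim(C^\perp)_A$ --- this becomes $\mathrm{info}(A)=\dim(C_2^\perp)_A-\dim(C_1^\perp)_A$. Because $C_1^\perp\subseteq C_2^\perp$ is again a nested pair of codimension $\ell$, the same equivalence that yields~(\ref{eqmaxD}) from~(\ref{eqinf2}), now read for $(C_2^\perp,C_1^\perp)$ with support confined to $A$ rather than $\bar{A}$, gives $\mathrm{info}(A)=\max\{\dim D\mid D\subseteq C_2^\perp,\ D\cap C_1^\perp=\{\vec{0}\},\ {\mbox{Supp}}(D)\subseteq A\}$.

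With this in hand the privacy count mirrors the reconstruction argument: $\mathrm{info}(A)\ge m$ is equivalent to the existence of $D\subseteq C_2^\perp$ with $D\cap C_1^\perp=\{\vec{0}\}$, $\dim D=m$ and ${\mbox{Supp}}(D)\subseteq A$, which forces $\#A\ge\#{\mbox{Supp}}(D)\ge M_m(C_2^\perp,C_1^\perp)$. Thus every $A$ with $\#A\le M_m(C_2^\perp,C_1^\perp)-1$ has $\mathrm{info}(A)\le m-1$, whereas $A={\mbox{Supp}}(D^\ast)$ for a minimiser $D^\ast$ yields a set of size $M_m(C_2^\perp,C_1^\perp)$ with $\mathrm{info}(A)\ge m$, giving $t_m=M_m(C_2^\perp,C_1^\perp)-1$. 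The one genuinely non-routine step is the passage to the dual codes: establishing $\dim{\mathcal{P}}_A(C)=\#A-\dim(C^\perp)_A$, i.e.\ that shortening $C^\perp$ on $A$ is dual to puncturing $C$ onto $A$. I expect this --- rather than either counting argument --- to be the crux, since it is what converts the projection formula of Theorem~\ref{theuncertain} into a statement about the shortened dual codes and hence about $M_m(C_2^\perp,C_1^\perp)$.
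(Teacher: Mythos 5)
Your proposal is correct and follows essentially the same route as the paper's own proof in Appendix~\ref{secproofof}: the reconstruction numbers are read off from~(\ref{eqmaxD}) by the same two-sided counting argument, and the privacy numbers are obtained by converting the projection formula into the shortened duals via Forney's second duality lemma $\dim {\mathcal{P}}_A(C)+\dim (C^\perp)_A=\#A$, exactly the step the paper also isolates as the key conversion. The only cosmetic difference is that you invoke Theorem~\ref{theuncertain} directly where the paper passes through~(\ref{eqfn}) first; the substance is identical.
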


For the sake of self containment we provide in Appendix~\ref{secproofof} a simple and direct proof avoiding the use of RDLP as well as that of mutual information.\\

By the very definition of $r_i$, for $i=1, \ldots , \ell$, the largest sets $A \subseteq I$, that do not reveal $i$ times $\log_2(q)$ bits of information, are of size exactly $r_i-1=n-M_{\ell -i+1}(C_1, C_2)$. When such sets have a systematic structure they may provide a second layer of security, which we in the next two sections shall demonstrate to be the case for families of secret sharing schemes based on monomial-Cartesian codes.

\begin{definition}
A set $A \subseteq I$ is called {\it{non-$i$-qualifying}} if from the entries corresponding to $A$ one is not able to recover $i$ times $\log_2(q)$ bits of information. The sets of largest possible size among the non-$i$-qualifying sets are called {\it{maximum non-$i$-qualifying}}. Equivalently, $A$ is maximum non-$i$-qualifying if $A \in A_{i-1}$ and $\#A=n-M_{\ell-i+1}(C_1,C_2)$.
\end{definition}

\section{Schemes based on monomial-Cartesian codes}\label{secpolsss}

In~\cite{MR3782266} two families of schemes based on monomial-Cartesian codes were shown to have significantly better parameters $t$ and $r$ than what can be produced by considering more naive coset constructions over the same polynomial algebra, e.g.\ nested generalized Reed-Muller codes. These improved schemes either have relatively large $\ell$ (\cite[Sec.\ 3]{MR3782266}) or have relatively small $\ell$ (\cite[Sec.\ 4]{MR3782266}). In the present section we show that the latter family supports the second layer of security alluded at in Section~\ref{secintro} and in Section~\ref{secsss}. Inspired by this insight in the section to follow we shall present a novel (a third) construction of so-called monomial-Cartesian ramp schemes of type C which are schemes designed to have a good second layer of security. Such schemes can have $\ell$ of both relatively small, relatively medium or relatively large size.\\

We start by recalling some results from~\cite{geil2017relative,MR3782266}. Consider a general Cartesian product point set $$S=S_1 \times \cdots \times S_m =\{\alpha_1, \ldots , \alpha_n\} \subseteq {\mathbb{F}}_q^m,$$ and write $s_i=\# S_i$. Clearly, $n=s_1 \cdots s_m$.   It is well-known that the vanishing ideal of $S$ becomes $I=\langle G_1(X_1), \ldots , G_m(X_m)\rangle \subseteq {\mathbb{F}}_q[X_1, \ldots  X_m]$ where $G_i(X_i)=\prod_{a \in S_i} (X_i-a)$ for $i=1, \ldots , m$, and we write $R={\mathbb{F}}_q[X_1, \ldots , X_m]/I$. The evaluation map ${\mbox{ev}}: R \rightarrow {\mathbb{F}}_q^n$ given by ${\mbox{ev}}(F+I)=(F(\alpha_1), \ldots , F(\alpha_n))$ is a homomorphism and when restricted to 
$${\mbox{Span}}_{\mathbb{F}_q} \{X_1^{i_1} \cdots X_m^{i_m}+I \mid 0 \leq i_v < s_v, v=1, \ldots , m\}$$ it becomes a vectorspace isomorphism. Given an arbitrary (but fixed) monomial ordering $\prec$ we write
$$\Delta(s_1, \ldots , s_m)=\{ X_1^{i_1} \cdots X_m^{i_m} \mid 0 \leq i_v < s_v, v=1, \ldots , m\} =\{N_1, \ldots , N_n\}$$
where the enumeration is according to $\prec$. Obviously, then $$\{{\mbox{ev}}(M+I) \mid M \in \Delta(s_1, \ldots ,s_n)\}$$ constitutes a basis for ${\mathbb{F}}_q^n$. The codes we consider are of the form 
$$C(L)=C(I,L)={\mbox{ev}}({\mbox{Span}}_{\mathbb{F}_q} \{ M+I \mid M \in L\}),$$ where $L \subseteq \Delta(s_1, \ldots , s_m)$ and therefore $\dim C(L)=\#L$. The notation $C(I,L)$ is in accordance with~\cite{lax}, but we shall in the remainder of the paper simply write $C(L)$ as $I$ is always clear from the context. Given a set of polynomials $\{F_1, \ldots , F_s\}$ with support in $\Delta(s_1, \ldots , s_m)$ write ${\mbox{lm}}(F_j)=X_1^{i_1^{(j)}} \cdots X_m^{i_m^{(j)}}$, $j=1, \ldots , s$, which we assume are pairwise different. The size of the support of ${\mbox{Span}}_{\mathbb{F}_q}\{ {\mbox{ev}}(F_1+I), \ldots , {\mbox{ev}}(F_s+I)\}$ is at least equal to
\begin{eqnarray}
&&\# \{ M \in \Delta(s_1, \ldots , s_m) \mid M {\mbox{ is divisible by some }} X_1^{i_1^{(j)}} \cdots X_m^{i_m^{(j)}} \}\label{eqegen} \\
&=&\# \big( \cup_{j=1}^{s} \{ X_1^{h_1} \cdots X_m^{h_m} \mid i_1^{(j)} \leq h_1 \leq s_1, \ldots , i_m^{(j)} \leq h_m \leq s_m\} \big). \nonumber 
\end{eqnarray}
 This fact corresponds to a particular incidence of the footprint bound (\cite{geilhoeholdt}), namely~\cite[Cor.\ 1]{geil2017relative} which is stated for the case $s_1=\cdots =s_m=q$, but which immediately carries over to the general case. For a different and later proof of~(\ref{eqegen}) see~\cite[Eq.\ 4]{beelen2018generalized} or \cite[Eq.\ 5]{datta2020relative}. The bound~(\ref{eqegen}) is sharp in the sence that if we write $S_i=\{\beta_1^{(i)}, \ldots , \beta_{s_i}^{(i)}\}$ then for the set of polynomials of the form 
 \begin{equation}\label{eqequal}
 F_j=\prod_{v=1}^{i_1^{(j)}}(X_1-\beta_v^{(1)}) \cdots \prod_{v=1}^{i_m^{(j)}}(X_m-\beta_v^{(m)}), j=1, \ldots , s
 \end{equation} 
 equality holds regarding the predicted support size.
  
Given $L_2 \subseteq L_1\subseteq \Delta(s_1, \ldots , s_m)$, to estimate the relative generalized Hamming weights $M_v(C,L_1),C(L_2))$ one can apply~(\ref{eqegen}), but to estimate  $M_v(C(L_2)^\perp, C(L_1)^\perp)$ one needs the Feng-Rao bound for dual codes. We collect such information in Theorem~\ref{theour} below, but first we introduce two functions.

\begin{definition} 
$$D(X_1^{i_1} \cdots X_m^{i_m})=\prod_{t=1}^{m}(s_t-i_t) {\mbox{ and }}
D^\perp(X_1^{i_1} \cdots
X_m^{i_m})=\prod_{t=1}^m(i_t+1), $$ 
and more generally, for $K
\subseteq \Delta(s_1, \ldots , s_m)$ 
\begin{eqnarray}
D(K)&=&\# \{ N \in \Delta(s_1, \ldots , s_m) \mid N {\mbox{ is divisible by some }} M \in
        K\}, \nonumber \\
D^\perp(K)&=&\# \{N \in \Delta (s_1, \ldots , s_m) \mid N {\mbox{ divides
                              some }} M \in K\}.\nonumber
\end{eqnarray}
\end{definition}

The following theorem corresponds to~\cite{MR3782266}[Th.\ 16]. 

\begin{theorem}\label{theour}
Let $S=S_1 \times \cdots \times S_m
\subseteq {\mathbb{F}}_q^m$ as above, and consider $L_2 \subset L_1 \subseteq \Delta
(s_1, \ldots , s_m)$. The codes $C(L_1)$ and $C(L_2)$ are of length $n$ and
the codimension equals
$\ell=\#L_1-\#L_2$. For $v=1, \ldots , \# L_1 - \# L_2$ we
have
\begin{eqnarray}
&M_v(C(L_1),C(L_2))\geq  \min \{{\mbox{$D$}}(K) \mid K \subseteq \{ N_u,
                          \ldots , N_n\} \cap L_1, \#K=v\},
                          \label{bnd}\\
&M_v(C(L_2)^\perp,C(L_1)^\perp)\geq  \min \{
                                       D^\perp(K)
                                      \mid K \subseteq \{N_1, \ldots
                                      N_{u^\perp} \} \backslash L_2, 
                                      \#K=v   \}, \label{bndperp} \
\end{eqnarray}
where $u=\min \{i \mid N_i \in L_1 \backslash L_2\}$ and $u^\perp =
\max \{i \mid N_i \in L_1 \}$.
\end{theorem}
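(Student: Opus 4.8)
The plan is to prove the two bounds separately, since (\ref{bnd}) is a footprint (primal) statement whereas (\ref{bndperp}) is a Feng--Rao (dual) statement. Both proofs start from the definition of the relative generalized Hamming weight: I fix a subspace $D$ of the relevant larger code that meets the subcode only in $\{\vec 0\}$ and has dimension $v$, bound $\#\mathrm{Supp}(D)$ from below, and minimise over all such $D$.

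For (\ref{bnd}), via the isomorphism $\mathrm{ev}$ I identify $D\subseteq C(L_1)$ with a $v$-dimensional space $\tilde D$ of reduced polynomials supported on $L_1$, and $D\cap C(L_2)=\{\vec 0\}$ becomes $\tilde D\cap\mathrm{Span}_{\mathbb{F}_q}\{M\mid M\in L_2\}=\{0\}$. I then pass to a Gaussian-eliminated basis $F_1,\dots,F_v$ with pairwise distinct leading monomials $K=\{\mathrm{lm}(F_1),\dots,\mathrm{lm}(F_v)\}\subseteq L_1$. The key observation is that every nonzero $F\in\tilde D$ satisfies $\mathrm{lm}(F)\in\{N_u,\dots,N_n\}$: otherwise all monomials of $F$ would have index $<u$ and, lying in $L_1$, would lie in $L_2$ by the definition of $u$, forcing $F\in\mathrm{Span}_{\mathbb{F}_q}\{M\mid M\in L_2\}$ and contradicting the trivial intersection. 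Hence $K\subseteq\{N_u,\dots,N_n\}\cap L_1$ with $\#K=v$, the footprint bound (\ref{eqegen}) gives $\#\mathrm{Supp}(D)\ge D(K)$, and since $K$ is an admissible competitor, $D(K)$ is at least the minimum in (\ref{bnd}); minimising over $D$ finishes this part.

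For (\ref{bndperp}) the natural first idea is to reduce to the primal case by reflection: the central involution $X_1^{i_1}\cdots X_m^{i_m}\mapsto X_1^{s_1-1-i_1}\cdots X_m^{s_m-1-i_m}$ of $\Delta$, which one checks reverses $\prec$ for \emph{any} monomial order and reverses divisibility, so that $D(K)=D^\perp(K^{\ast})$ and the index ranges transform as required. The obstruction is that the dual of a monomial-Cartesian code is not in general monomial-Cartesian: under the diagonal twist $d(\alpha)=\prod_t 1/G_t'(\alpha_t)$ the Gram matrix of $\{\mathrm{ev}(N_i)\}$ becomes triangular for divisibility with a nonzero anti-diagonal, but it is genuinely anti-diagonal only for special $L$ (such as order ideals), which $L_1,L_2$ need not be. I therefore invoke the Feng--Rao bound for dual codes directly. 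For $\vec c\in C(L_2)^\perp\setminus C(L_1)^\perp$ I set $w=\max\{i\le u^\perp\mid \langle\vec c,\mathrm{ev}(N_i)\rangle\ne 0\}$; orthogonality to $L_2$ forces $N_w\notin L_2$ and the truncation forces $w\le u^\perp$, so $N_w\in\{N_1,\dots,N_{u^\perp}\}\setminus L_2$. The rank argument then studies the matrix $\big(\langle\vec c,\mathrm{ev}(N_i)\ast\mathrm{ev}(N_j)\rangle\big)=V^{\top}\mathrm{diag}(\vec c)\,V$, whose rank is at most $\#\mathrm{Supp}(\vec c)$; the anti-triangular structure exhibits a nonsingular submatrix indexed by the divisors of $N_w$, giving $\#\mathrm{Supp}(\vec c)\ge D^\perp(\{N_w\})$. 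Upgrading to the subspace version---a basis of $D$ with distinct leading monomials $K\subseteq\{N_1,\dots,N_{u^\perp}\}\setminus L_2$ and the estimate $\#\mathrm{Supp}(D)\ge D^\perp(K)$---yields (\ref{bndperp}).

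The main obstacle is the dual bound. The primal part is essentially bookkeeping around the footprint bound already recorded in (\ref{eqegen}). The dual part instead requires the Feng--Rao rank machinery, and the delicate points will be verifying that the well-behaving pairs attached to the triangular Gram structure are counted exactly by $D^\perp$, and promoting the single-word weight bound to a statement about the relative generalized Hamming weight of a $v$-dimensional space carrying $v$ distinct leading monomials. Since this is precisely \cite{MR3782266}[Th.\ 16], one may alternatively cite the Feng--Rao bound established there and in \cite{geil2017relative}.
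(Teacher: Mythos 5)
The paper itself offers no proof of Theorem~\ref{theour}: it is taken verbatim from \cite{MR3782266}[Th.\ 16], so your fallback of simply citing that result is exactly what the paper does, and everything you add on top of the citation is extra. Your argument for (\ref{bnd}) is correct and essentially complete given the footprint estimate (\ref{eqegen}) already recorded in the paper: Gaussian elimination to pairwise distinct leading monomials, the observation that trivial intersection with $C(L_2)$ forces every leading monomial of a nonzero element to have index at least $u$ (any $F$ with $\mathrm{lm}(F)\prec N_u$ and $\mathrm{Supp}(F)\subseteq L_1$ has $\mathrm{Supp}(F)\subseteq L_2$ by the minimality of $u$), and then $\#\mathrm{Supp}(D)\ge D(K)$ with $K$ an admissible competitor in the minimum. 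This is the intended mechanism and matches the paper's discussion around (\ref{equncleeval}) and (\ref{eqsimpler}). For (\ref{bndperp}) your diagnosis is also right --- a reflection of $\Delta(s_1,\ldots,s_m)$ does not identify the dual pair with a primal pair in general, and the Feng--Rao machinery for dual codes is needed --- but the sketch as written has one soft spot: with $w=\max\{i\le u^\perp\mid\langle\vec c,\mathrm{ev}(N_i)\rangle\ne 0\}$ defined by a \emph{truncated} maximum, the anti-triangularity of the divisor-indexed submatrix of $V^{\top}\mathrm{diag}(\vec c)V$ is not automatic, because a product $N\cdot(N_w/N')$ of two divisors of $N_w$ can have index strictly larger than $u^\perp$, where the syndrome $\langle\vec c,\mathrm{ev}(N_k)\rangle$ is not known to vanish (orthogonality to $C(L_2)$ and the definition of $w$ only control the indices in $L_2$ and those in the range between $w$ and $u^\perp$). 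Closing this, together with the promotion from single codewords to a $v$-dimensional $D$ with $v$ distinct pivots outside $L_2$, is exactly the content of \cite{MR3782266}[Th.\ 16] and \cite{geil2017relative}, so deferring to those references at that point is sound; just do not present the truncated anti-triangular argument as if it settled the matter on its own.
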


To establish bounds on the numbers $t_1, \ldots , t_\ell$ and $r_1, \ldots r_\ell$, respectively, for the secret sharing scheme defined from $C_2=C(L_2) \subseteq C_1=C(L_1)$, by Theorem~\ref{thegen} we only need to apply~(\ref{bndperp}) and~(\ref{bnd}), respectively. \\

Furthermore, when given $A \subseteq {\mathcal{I}}$, according to Theorem~\ref{theuncle}, the amount of possible secrets corresponding to a given share vector $(c^\prime_{i_1}, \ldots , c^\prime_{i_{\# A}})$ of $A \subseteq {\mathcal{I}}$ equals $q^s$ where
\begin{eqnarray}
s&=&\max \{ \nu \mid  F_1(\vec{a})=\cdots = F_\nu(\vec{a})=0 {\mbox{ for all }} {\vec{a}} \in A, \nonumber \\
&&{\mbox{ \ \hspace{0.9cm}   \  \ }} {\mbox{lm}}(F_1) \prec \cdots \prec {\mbox{lm}}(F_\nu),\nonumber \\
&&{\mbox{ \ \hspace{0.9cm}   \ }}  {\mbox{ and for }} i=1, \ldots , \nu {\mbox{ it holds that }} {\mbox{Supp}}(F_i) \subseteq L_1,\nonumber \\
&&{\mbox{ \ \hspace{0.9cm}   \ }} {\mbox{ but for any non-trivial linear combination }} F {\mbox{ of }} F_1, \ldots , F_\nu \nonumber \\
&& {\mbox{ \ \hspace{0.9cm}   \ }}{\mbox{ it holds that }} M \notin L_2 {\mbox{ for some }} M \in {\mbox{Supp}}(F) \} \label{equncleeval}
\end{eqnarray}
Here, the support ${\mbox{Supp}}(F)$ of a polynomial $$F(X_1, \ldots , X_m)=\sum_{i_1, \ldots , i_m} a_{i_1, \ldots , i_m}X_1^{i_1} \cdots X_m^{i_m}$$ is the set of monomials $X_1^{i_1} \cdots X_m^{i_m}$ for which the coefficient $a_{i_1, \ldots i_m}$ is non-zero. Observe, that the latter requirement in~(\ref{equncleeval}) ensures that $$D={\mbox{Span}}_{\mathbb{F}_q} \{ {\mbox{ev}}(F_1+I), \ldots , {\mbox{ev}}(F_\nu+I)\} \cap C(I,L_2) = \{ \vec{0} \}.$$ If $L_2 \subseteq L_1$ has been chosen in such a way that all monomials in $L_1 \backslash L_2$ are larger with respect to $\prec$ than all monomials in $L_2$ then~(\ref{equncleeval}) simplifies to
\begin{eqnarray}
s&=&\max \{ \nu \mid  F_1(\vec{a})=\cdots = F_\nu(\vec{a})=0 {\mbox{ for all }} {\vec{a}} \in A, \nonumber \\
&&{\mbox{ \ \hspace{0.9cm}   \  \ }} {\mbox{lm}}(F_1) , \cdots , {\mbox{lm}}(F_\nu) {\mbox{ are pairwise different and belong to }} L_1 \backslash L_2,\nonumber \\
&&{\mbox{ \ \hspace{0.9cm}   \ }}  {\mbox{ and for }} i=1, \ldots , \nu {\mbox{ it holds that }} {\mbox{Supp}}(F_i) \subseteq L_1 \}. \label{eqsimpler}
\end{eqnarray}
The construction of the good family of schemes in~\cite[Sec.\ 4]{MR3782266} requires that $L_2=\{N_1, \ldots , N_{\# L_2}\}$ and that $L_1=\{N_1, \ldots  , N_{\# L_1}\}$ which in particular implies that we can apply~(\ref{eqsimpler}). The idea behind the construction is that in~(\ref{bnd}) as well as in~(\ref{bndperp}) one only needs to consider $K \subseteq L_1 \backslash L_2$, and by doing so, one can control the parameters $t_1, \ldots , t_\ell$ and $r_1, \ldots , r_\ell$. Furthermore, by the very definition of a monomial ordering this implies that for any $N\in L_1 \backslash L_2$ (in general for any $N \in L_1$) one has that all divisors of $N$ belong to $L_1$. This implies the existence of polynomials of the form~(\ref{eqequal}) having $N$ as leading monomial and with the support being contained in $L_1$, and thereby that the estimate on $r_i$, $i=1, \ldots , \ell$ is sharp.  We start by giving an example.

\begin{example}\label{ex1}
In this example we consider an incidence of the family of schemes with relatively small $\ell$ treated in~\cite[Sec.\ 4]{MR3782266}.
Consider $S_1, S_2 \subseteq {\mathbb{F}_q}$ where $\#S_1=\#S_2=6$ (and consequently $q \geq 7$). Choose $\prec$ to be the graded lexicographic ordering on the monomials in two variables. I.e. $X_1^{i_1}X_2^{i_2} \prec X_1^{j_1}X_2^{j_2}$ if $i_1+i_2 < j_1+j_2$, or if $i_1+i_2=j_1+j_2$, but $j_1<j_2$.  Enumerating the elements of $\Delta(s_1, s_2)$ according to $\prec$ and choosing $L_2=\{N_1, \ldots , N_{17} \}$ and $L_1$ as $L_2 \cup \{N_{18}, N_{19}\}$ the situation is as in Figure~\ref{figaro}. 
\begin{figure}[h]
$$
\begin{array}{cccccc}
N_{21}&N_{26}&N_{30}&N_{33}&N_{35}&N_{36}\\
{\text{\rund{$N_{15}$}}}&N_{20}&N_{25}&N_{29}&N_{32}&N_{34}\\
{\text{\rund{$N_{10}$}}}&{\text{\rund{$N_{14}$}}}&{\text{\rectangled{$N_{19}$}}}&N_{24}&N_{28}&N_{31}\\
{\text{\rund{$N_6$}}}&{\text{\rund{$N_9$}}}&{\text{\rund{$N_{13}$}}}&{\text{\rectangled{$N_{18}$}}}&N_{23}&N_{27}\\
{\text{\rund{$N_3$}}}&{\text{\rund{$N_5$}}}&{\text{\rund{$N_8$}}}&{\text{\rund{$N_{12}$}}}&{\text{\rund{$N_{17}$}}}&N_{22}\\
{\text{\rund{$N_1$}}}&{\text{\rund{$N_2$}}}&{\text{\rund{$N_4$}}}&{\text{\rund{$N_7$}}}&{\text{\rund{$N_{11}$}}}&{\text{\rund{$N_{16}$}}}
\end{array}
$$
\caption{The situation in Example~\ref{ex1}: The monomial in position $(i,j)$ equals $X_1^i X_2^j$.  $L_2$ corresponds to the circled monomials and $L_1$ equals $L_2$ plus the boxed monomials.}
\label{figaro}
\end{figure}
\begin{figure}[h]
$$
\begin{array}{cccccc}
6&5&4&3&2&1\\
12&10&8&6&4&2\\
18&15&12&9&6&3\\
24&20&16&12&8&4\\
30&25&20&15&10&5\\
36&30&24&18&12&6
\end{array}  {\mbox{ \ \ \ \ \ \ \ }}\begin{array}{cccccc}
6&12&18&24&30&36\\
5&10&15&20&25&30\\
4&8&12&16&20&24\\
3&6&9&12&15&18\\
2&4&6&8&10&12\\
1&2&3&4&5&6
\end{array}
$$
\caption{In the array on the left hand side we have $D(N)$ and on the right hand side $D^\perp(N)$ for all $N\in \Delta(6,6)$.}
\label{figaro2}
\end{figure}
The scheme clearly has $n=36$ participants and operates with secrets in ${\mathbb{F}}_q^{\ell=2}$, and it is clear that 
\begin{eqnarray}
t_1&=&M_1(C_2^\perp,C_1^\perp)-1 \geq \min \{ D^\perp(N_{18}), D^\perp(N_{19})\}-1=11\nonumber \\
t_2&=&M_2(C_2^\perp , C_1^\perp)-1 \geq D^\perp (\{N_{18}, N_{19}\})-1=15-1=14 \nonumber \\
r_1& =&n-M_2(C_1,C_2)+1= 36- D(\{N_{18}, N_{19} \}) +1 =36-15+1=22\nonumber \\ 
r_2&=&n-M_1(C_1,C_2)+1= 36-\min\{ D(N_{18}), D(N_{19}) \} +1=36-12+1=25. \nonumber
\end{eqnarray}
Note, that the choice of $L_1 \backslash L_2=\{N_{19}, N_{18} \}$ ensures good parameters of the scheme as both $D^\perp (N_{19})$ and $D^\perp (N_{18})$ are strictly larger than  $D^\perp(X_1^iX_2^j)$ for any other monomial on the diagonal $i+j=5$. The same thing holds for the function $D$. In this way the values of $t_i$ and $r_i$ are simultaneously optimized.\\
Now enumerate $S_\nu=\{\beta_1^{(\nu)}, \ldots , \beta_6^{(\nu)}\}$, $\nu=1,2$, and identify $\beta_s^{(\nu)}$, $s=1, \ldots , 6$ with an organization that we shall denote $O_s^{(\nu)}$. I.e.\ in total we have $2 \cdot 6 =12$ organizations $O_1^{(1)}, \ldots , O_6^{(1)}, O_1^{(2)}, \ldots ,O_6^{(2)}$,  and each  participant $(\beta_i^{(1)}, \beta_j^{(2)})$ is a representative of exactly the two organizations $O_i^{(1)}$ and $O_j^{(2)}$. Consider the polynomials $F_1=\prod_{i=1}^{3}(X_1-\beta_i^{(1)})\prod_{j=1}^{2}(X_2-\beta_j^{(2)})$ and $F_2=\prod_{i=1}^{2}(X_1-\beta_i^{(1)})\prod_{j=1}^{3}(X_2-\beta_j^{(2)})$ which have different leading monomials both belonging to $L_1 \backslash L_2$ and with all monomials in their support belonging to $L_1$. The set of non-roots of $F_1$ are $T_1= \{ \beta_4^{(1)}, \ldots , \beta_6^{(1)}\}\times \{\beta_3^{(2)},\ldots , \beta_6^{(2)}\}$ and similarly the non-roots of $F_2$ are $T_2=\{ \beta_3^{(1)}, \ldots , \beta_6^{(1)}\}\times \{\beta_4^{(2)},\ldots , \beta_6^{(2)}\}$. Hence, if $A$ does {\textit{not}} contain an element from $T_1$ then $F_1$ has all of $A$ as roots, and similarly if $A$ does {\textit{not}} contain an element from $T_2$ then $F_2$ has all of $A$ as roots. In both cases $A$ can recover at most $1$ times $\log_2(q)$ bits of information. Both sets are of size $r_\ell-1=24$ and therefore maximum non-$2$-qualifying. This describes some kind of considerate property in that by leaving out representatives from any $3$ out of the first type of organizations who are simultaneously  representatives from any fixed $4$ out of the second type of organizations one cannot recover the entire secret, and similarly with $4$ out of the first type and $3$ out of the second type. \\
The elements which are {\textit{not}} common roots of $F_1$ and $F_2$ are $T_1 \cup T_2 $. Hence, if $A$ does {\textit{not}} contain an element from $T_1 \cup T_2$ then $A$ cannot recover any information. By inspection such $A$ is maximum non-$1$-qualifying, i.e. they are of size $r_1-1=21$. Again we can interpret this as some kind of considerate property, in that given any set of $4$ organizations of the first type and any set of $4$ organizations of the second type one cannot leave out more than one participant representing one from each set if one wants to recover any information. \\
As the established maximum non-$i$-qualifying sets have a nice systematic structure and are of size significantly larger than the bound on $t_i$ we have established a second layer of security. 
\end{example}

In the above example we considered a case with two variables and $s_1=s_2$. For such case choosing the graded lexicographic ordering is the optimal choice as along the diagonal 
$$\{X_1^{i_1}X_2^{i_2} \mid i_1+i_2=e\}$$
where, $e$ is any fixed integer in $[0, 2 (q-1)]$, and $q$ is the field size, the values of both $D$ and $D^\perp$ are highest possible at the center, and decreases symmetrically the larger the distance is to the center (i.e. the further $i_1$ becomes from $i_2$).  We now state a theorem describing such case in general. The results regarding $r_i$, and $t_i$ are a direct adaption of~\cite[Th.\ 27]{MR3782266}, whereas the treatment of maximum non-$i$-qualifying sets with the corresponding kind of considerate property is new.
\begin{theorem}\label{the27ish}
Consider $S_1, S_2 \subseteq {\mathbb{F}}_q$ with $s=\#S_1=\#S_2$. Let  $\prec$ be the graded lexicographic ordering with $X_1 \prec X_2$. Consider $0 \leq i_1 < i_2 \leq s-1$ and let
\begin{eqnarray}
L_1&=&\{ N \in \Delta(s,s) \mid N \preceq X_1^{i_1}X_2^{i_2} \} \nonumber \\
L_2&=&\{ N \in \Delta(s,s) \mid N \preceq X_1^{i_2}X_2^{i_1} \}. \nonumber
\end{eqnarray} 
The secret sharing scheme defined from $C(L_2) \subseteq C(L_1)$ has parameters $\ell = i_2-i_1+1$ and for $m=1, \ldots , \ell$
\begin{eqnarray}
t_m&\geq& (i_1+m)(i_2+1)-\dfrac{m(m-1)}{2} -1  \nonumber \\
r_m&=&s^2-\left[(s-i_1)(s-i_2)+\sum_{t=2}^{\ell-m+1}((s-i_1)-(t-1))\right]+1 \label{eqmidle} \\
&=& s^2-(s-i_1)(s-i_2+\ell-m) +\dfrac{(\ell-m+1)(\ell-m)}{2}+1   \nonumber
\end{eqnarray}
Write $S_\nu =\{\beta_1^{(\nu)}, \ldots , \beta_s^{(\nu)}\}$, for $\nu=1,2$, and $S=S_1 \times S_2$. 
For $w=1, \ldots , \ell$ define
$$T_w=\{ \beta_{(i_2+2)-w}^{(1)}, \ldots , \beta_s^{(1)}\} \times \{\beta_{i_1+w}^{(2)}, \ldots , \beta_s^{(2)}\}.$$
Then for any set ${\mathcal{J}} \subseteq \{1, \ldots , \ell\}$ it holds that 
\begin{equation}
S \backslash \cup_{i \in {\mathcal{J}}}T_i \label{eqcupcake}
\end{equation}
is a non-$u$-qualifying set
where $u=\ell+1-\# {\mathcal{J}}$. When ${\mathcal{J}}=\{ 1, \ldots , \# {\mathcal{J}}\}$ or ${\mathcal{J}}=\{u, \ldots , \ell\}$ then~(\ref{eqcupcake}) is a maximum non-$u$-qualifying set.
\end{theorem}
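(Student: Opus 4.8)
The plan is to split the statement into two assertions and prove them in turn: first that $S\setminus\cup_{i\in{\mathcal{J}}}T_i$ is non-$u$-qualifying for \emph{every} ${\mathcal{J}}$, and then that for the two extreme consecutive choices of ${\mathcal{J}}$ it attains the maximal admissible size $r_u-1$. The common starting point is the observation that, since $\prec$ orders monomials first by total degree and then by the $X_2$-exponent, $L_1\backslash L_2$ is exactly the anti-diagonal block $\{M_w:=X_1^{i_2-w+1}X_2^{i_1+w-1}\mid w=1,\ldots,\ell\}$, all of total degree $i_1+i_2$, with $M_1$ carrying the largest $X_1$-exponent. First I would attach to each $M_w$ the polynomial $F_w=\prod_{v=1}^{i_2-w+1}(X_1-\beta_v^{(1)})\prod_{v=1}^{i_1+w-1}(X_2-\beta_v^{(2)})$ of the form (\ref{eqequal}); it has $\mathrm{lm}(F_w)=M_w\in L_1\backslash L_2$ and support inside $L_1$, and a direct check of its non-roots shows $T_w=\mathrm{Supp}(\ev(F_w+I))$.

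For the non-qualifying claim I would fix ${\mathcal{J}}$ and set $A=S\setminus\cup_{i\in{\mathcal{J}}}T_i$. By construction $A\cap T_i=\emptyset$ for each $i\in{\mathcal{J}}$, so every $F_i$ with $i\in{\mathcal{J}}$ vanishes on all of $A$; these $F_i$ have pairwise distinct leading monomials in $L_1\backslash L_2$ and support in $L_1$. Since all monomials of $L_1\backslash L_2$ exceed those of $L_2$, the simplified uncertainty formula (\ref{eqsimpler}) applies, and the family $\{F_i\}_{i\in{\mathcal{J}}}$ exhibits $\nu=\#{\mathcal{J}}$, whence the uncertainty satisfies $s\ge\#{\mathcal{J}}$; equivalently $A$ holds at most $\ell-\#{\mathcal{J}}=u-1$ $q$-bits of information and therefore cannot recover $u$, i.e. $A$ is non-$u$-qualifying.

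For the maximality claim I would compute $\#A$ through $D=\mathrm{Span}_{\mathbb{F}_q}\{\ev(F_i+I)\mid i\in{\mathcal{J}}\}$. Because the support of a span equals the union of the supports of its generators, $\mathrm{Supp}(D)=\cup_{i\in{\mathcal{J}}}T_i$, and since each $F_i$ is of the form (\ref{eqequal}) the footprint bound (\ref{eqegen}) is sharp, giving $\#\mathrm{Supp}(D)=D(\{M_i\mid i\in{\mathcal{J}}\})$ and hence $\#A=n-D(\{M_i\mid i\in{\mathcal{J}}\})$. For ${\mathcal{J}}=\{1,\ldots,k\}$ with $k=\#{\mathcal{J}}=\ell-u+1$, the monomials $M_1,\ldots,M_k$ sweep from $X_1$-exponent $i_2$ downward, so the divisibility staircases telescope: on top of the base rectangle $D(M_1)=(s-i_2)(s-i_1)$ each successive $M_w$ contributes one fresh column of $(s-i_1)-(w-1)$ monomials, yielding $D(\{M_1,\ldots,M_k\})=(s-i_1)(s-i_2)+\sum_{t=2}^{k}\big((s-i_1)-(t-1)\big)$, which by the reconstruction formula (\ref{eqmidle}) is exactly $M_{\ell-u+1}(C(L_1),C(L_2))$. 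Thus $\#A=n-M_{\ell-u+1}=s^2-M_{\ell-u+1}=r_u-1$. For ${\mathcal{J}}=\{u,\ldots,\ell\}$ the same value follows immediately from the reflection $X_1\leftrightarrow X_2$, which fixes the configuration (as $s_1=s_2$ and $D(X_1^aX_2^b)=(s-a)(s-b)$ is symmetric) and carries $\{M_1,\ldots,M_k\}$ onto $\{M_u,\ldots,M_\ell\}$. Being non-$u$-qualifying of the maximal size $r_u-1$, and using the strict monotonicity $r_{u-1}<r_u$ of the relative generalized Hamming weights to locate it in $A_{u-1}$, the set $A$ is then maximum non-$u$-qualifying.

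The step I expect to be the main obstacle is the sharp evaluation of the union of divisibility staircases, namely verifying that the greedy column-by-column count reproduces the closed form in (\ref{eqmidle}) without over- or under-counting where the rectangles overlap and where the ambient constraints $h,h'\le s-1$ truncate them; getting this telescoping bookkeeping exactly right, and checking the reflection acts as claimed, is the delicate part. Everything else reduces to the identity $\mathrm{Supp}(D)=\cup_{i\in{\mathcal{J}}}T_i$ together with the already-recorded sharpness of the footprint bound for polynomials of the form (\ref{eqequal}).
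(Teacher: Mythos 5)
Your proposal is correct and follows essentially the same route as the paper's (very terse) proof: the explicit polynomials $F_w$ of the form~(\ref{eqequal}) with leading monomials on the anti-diagonal $L_1\backslash L_2$ generalize the argument of Example~\ref{ex1} to give the non-$u$-qualifying property via~(\ref{eqsimpler}), and the telescoping count showing $\#\bigl(T_1\cup\cdots\cup T_{\#{\mathcal{J}}}\bigr)=\#\bigl(T_u\cup\cdots\cup T_\ell\bigr)$ equals the square bracket in~(\ref{eqmidle}) is exactly the cardinality check the paper invokes for maximality. You simply supply the details (including the harmless $X_1\leftrightarrow X_2$ reflection and the monotonicity remark placing $A$ in $A_{u-1}$) that the paper defers to Example~\ref{ex1} and to \cite[Th.\ 27]{MR3782266}.
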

\begin{proof}
The parameters are established by combining~\cite[Th.\ 27]{MR3782266} with Theorem~\ref{thegen}. That~(\ref{eqcupcake}) is a non-$u$-qualifying set follows from similar arguments as in Example~\ref{ex1}. The result regarding maximum non-$u$-qualifying sets follows from the fact that $\# (T_1 \cup \cdots \cup T_{\mathcal{J}})=\#(T_u \cup \cdots \cup T_\ell)$ equals the content of the square bracket in~(\ref{eqmidle})
\end{proof}

\begin{remark}
Similar remarks concerning considerate properties as described in Example~\ref{ex1} apply to the construction in Theorem~\ref{the27ish}.
\end{remark}
The situation of three or more variables is less trivial compared to Theorem~\ref{the27ish} in which we treated the two-variable case, but still the idea behind the improved construction of nested codes can be applied in some cases. We here only treat the situation of the codimension $\ell$ being equal to $1$ which we illustrate by an example that immediately generalizes to all codimension $1$ cases. 
\begin{example}\label{ex3}
Consider $S_1, S_2, S_3 \subseteq {\mathbb{F}_q}$ with $s=\# S_1=\#S_2=\#S_3=4$. Let $\prec$ be a graded lexicographic ordering and let $L_2=\{ N \in \Delta(4,4,4) \mid N \prec X_1^2X_2^2X_3^2\}$ and $L_1=L_2 \cup \{ X_1^2X_2^2X_3^2\}$. We have $M(C(L_1),C(L_2))=D(X_1^2X_2^2X_3^2)=8$ and $M(C(L_2)^\perp, C(L_1)^\perp)\geq D^\perp(X_1^2X_2^2X_3^2)=27$. Note, that $X_1^2X_2^2X_3^2$ is the monomial with both highest $D^\perp$- and $D$-value among the monomials in $\Delta(4,4,4)$ of total degree $6$, in which way we have optimized the parameters of the scheme. Hence, the secret sharing scheme based on $C(L_2)\subseteq C(L_1)$ has $n=64$ participants, a secret of $\ell=1$ times $\log_2(q)$ bits and $t \geq 26$, $r=57$. Write $S_\nu=\{\beta_1^{(\nu)}, \beta_2^{(\nu)}, \beta_3^{(\nu)},\beta_4^{(\nu)}\}$, $\nu=1,2,3$, which we identify with $4$ different organizations $O_1^{(\nu)}, O_2^{(\nu)}, O_3^{(\nu)}, O_4^{(\nu)}$ at level $\nu$. In this way an element of $S=S_1\times S_2 \times S_3$ is a member of a unique organization at each of the three levels. Now 
$$S \backslash \big( \{\beta_3^{(1)}, \beta_4^{(1)}\} \times \{\beta_3^{(2)}, \beta_4^{(2)}\} \times \{\beta_3^{(3)}, \beta_4^{(3)}\} \big)$$
is a maximum non-$1$-qualifying set (i.e. a set of size $56$ possessing no information). The scheme can be viewed as having some kind of considerate properties in that given an organization of size $2$ at each of the three levels by leaving out all participants who simultaneously represent these three organizations one is not able to recover any information.
\end{example}

As illustrated in \cite[Ex.\ 14]{MR3782266} already for two variables optimizing simultaneously the parameters $r_1, \ldots , r_\ell$ and $t_1, \ldots , t_\ell$ when given fixed $\ell$ may not be possible when the sets $S_i$ are of different sizes. However, with the second layer of security in mind, it makes sense to concentrate mainly on the parameters $r_1, \ldots , r_\ell$ over $t_1, \ldots , t_\ell$ ensuring systematic maximum non-$i$-qualifying sets with structures similar to what is described in Theorem~\ref{the27ish}. We illustrate the idea with an example.

\begin{example}\label{ex4}
Consider $S_1, S_2 \subseteq {\mathbb{F}}_q$ with $\#S_1=8$ and $\#S_2=5$. The $D$ and $D^\perp$ values of the elements of $\Delta(8,5)$ are depicted in Figure~\ref{figaro3}. Choosing $\prec$ to be the graded lexicographic ordering with $X_1 \prec X_2$ and letting $L_2=\{N \mid N \prec X_1^5X_2\}$ and $L_1=L_2 \cup \{  X_1^5X_2, X_1^4X_2^2\}$ we locally optimize the $D$ values as $D(X_1^5X_2)=D(X_1^4X_2^2=12$ which is larger than $D(X_1^{i_1}X_2^{i_2})$ for any other $(i_1, i_2)$ with $i_1+i_2=6$. However, the $D^\perp$ values are not optimized in a similar fashion as $D^\perp (X_1^5X_2)=12$ and $D^\perp (X_1^4X_2^2)=15$, which are both smaller than $D^\perp (X_1^3X_2^3)=16$. We obtain $n=40$, $\ell=2$,  $r_1=n-D(X_1^5X_2,X_1^4X_2^2)+1=40-15+1=26$ , $r_2=40-12+1=29$, $t_1 \geq \min\{ D^\perp X_1^5X_2), D^\perp (X_1^4X_2^2)\}-1=11$, and $t_2 \geq D^\perp (X_1^5X_2X_2,X_1^4X_2^2)-1=16$. Identifying the elements of $S_1$ with $8$ different organizations $O_1^{(1)}, \ldots , O_8^{(1)}$ and similarly the element of $S_2$ with $5$ different organizations $O_1^{(2)}, \ldots , O_5^{(2)}$ by a bijective map each participant represents a unique organization from each of the two sets. Given any $3$ organizations from the first level and any $4$ organizations from the second level, by leaving out all participants representing simultaneously one organization from each subset, one is at most able to recover $1$ times $\log_2(q)$ bits of information. Similarly with $4$ organizations from the first level and $3$ from the second level. Leaving out the union of the about mentioned participants one cannot recover any information. Hence, even though the only information we have on $t_1$ is that it is at least $11$, we have a series of systematic sets of size $40-12=28$ who cannot recover more than $1$ times $\log_2(q)$ bits of information. Similarly, even though the only information we have on $t_2$ is that it is at least $15$, we have a series of systematic sets of size $40-15=25$ from who cannot reveal any information. Adapting other types of monomial ordering does not seem to help optimizing locally both $D$ and $D^\perp$.
\begin{figure}[h]
$$
\begin{array}{cccccccc}
8&7&6&5&4&3&2&1\\
16&14&12&10&8&6&4&2\\
24&21&18&15&12&9&6&3\\
32&28&24&20&16&12&8&4\\
40&35&30&25&20&15&10&5
\end{array}  {\mbox{ \ \ \ \ \ \ \ }}\begin{array}{cccccccc}
5&10&15&20&25&30&35&40\\
4&8&12&16&20&24&28&32\\
3&6&9&12&15&18&21&24\\
2&4&6&8&10&12&14&16\\
1&2&3&4&5&6&7&8
\end{array}
$$
\caption{In the array on the left hand side we have $D(N)$ and on the right hand side $D^\perp(N)$ for all $N\in \Delta(8,5)$.}
\label{figaro3}
\end{figure}
\end{example}

\begin{remark}
As Theorem~\ref{the27ish} generalizes Example~\ref{ex1} it is straightforward to generalize Example~\ref{ex4} to cover all cases of point sets $S_1 \times S_2$. We leave the details for the reader.
\end{remark}

\section{Considerate schemes}\label{secdem}

In this section we optimize the second layer of security whilst paying no interest in the worst case information leakage in terms of number of participants. I.e. we are interested in the second layer of security and in the reconstruction numbers $r_1, \ldots , r_\ell$, but downplay the interest in the numbers $t_1, \ldots , t_\ell$. Our new construction is designed to have very systematic maximum non-$u$-qualifying sets for any $u=1, \ldots , \ell$, the systematic form giving rise to considerate properties along the same line as those described in the previous section. A particular advantage of the new construction is that in contrast to the construction of~\cite[Sec.\ 4]{MR3782266} as treated in Section~\ref{secpolsss}, it allows for a great variety of possible values of $\ell$, including small, medium sized or large. The codes are defined by the same evaluation map ${\mbox{ev}}:R \rightarrow {\mathbb{F}}_q^n$ as in Section~\ref{secpolsss}, but we shall employ a different monomial ordering. We gain extra freedom by no longer requiring that all monomials in $L_2$ are smaller than the remaining monomials in $L_1$ with respect to the applied monomial ordering. To deal with this new situation we introduce the following definition.

\begin{definition}\label{defdem}
Let $S_1, \ldots , S_m \subseteq {\mathbb{F}}_q$ all of size at least $2$ and write $s_1=\# S_1, \ldots , s_m=\# S_m$. Choose integers $0 \leq v_1 < j_1 < s_1, \ldots , 0 \leq v_m < j_m < s_m$, and let 
$$L_1=\{X_1^{i_1} \cdots X_m^{i_m} \mid 0 \leq i_1 \leq j_1, \ldots , 0 \leq i_m \leq j_m\}$$ and $L_2=L_1\backslash \Box$ where $$\Box =\{X_1^{i_1} \cdots X_m^{i_m} \mid v_1 \leq i_1 \leq j_1, \ldots , v_m \leq i_m \leq j_m \}.$$ 
Consider the lexicographic ordering $\prec$ with $X_m \prec \cdots \prec X_2 \prec X_1$, and write $N_{\mbox{min}}=X_1^{v_1} \cdots X_m^{v_m}$ (which is the minimal element of $\Box$ with respect to $\prec$), and define
$$\Diamond=\{M \in L_1 \mid N_{\mbox{min}} \preceq M\}.$$
A secret sharing scheme defined from $C_2=C(L_2) \subseteq C(L_1)=C_1$ as above and satisfying that for all $i=1, \ldots , \ell=\# \Box$ it holds that among those $K \subseteq \Diamond$ of size $i$ a $K^\prime \subseteq \Box$ exists with $D(K^\prime)$ being minimal is called a monomial-Cartesian ramp scheme of type C.
\end{definition}

\begin{example}\label{exdem1}
In this example we illustrate the notation $\Box$ and $\Diamond$ from Definition~\ref{defdem} in the case of $m=2$ and $s_1=6$, $s_2=5$, $j_1=3$, $j_2=3$, $v_1=1$, and $v_2=2$.
Figure~\ref{figdiamond} illustrates the situation.
\begin{figure}[h]
$$
\begin{array}{cccccc}
\cdot&\cdot&\cdot&\cdot&\cdot&\cdot\\
\cdot&-&-&-&\cdot&\cdot \\
\cdot&-&-&-&\cdot&\cdot \\
\cdot & \cdot &+&+&\cdot &\cdot \\
\cdot & \cdot &+&+&\cdot &\cdot
\end{array}
$$
\caption{The situation in Example~\ref{exdem1}. Monomials marked with ``$-$'' correspond to $\Box$. Adding the monomials marked with ``$+$'' one obtains $\Diamond$.}
\label{figdiamond}
\end{figure}
\end{example}

\begin{remark}
For the particular case of $m=1$ the monomial-Cartesian ramp schemes of type C correspond to Blakley and Meadows enhancement of Shamir's secret sharing scheme in~\cite{Blakley} and the last condition in Definition~\ref{defdem} is trivially satisfied.  
\end{remark}

\begin{remark}\label{remdem}
When $C_1=C(L_2) \subseteq C(L_1)=C_2$ are defined as in Definition~\ref{defdem} then we only need to consider in~(\ref{equncleeval}) as leading monomials those that belong to $\Diamond$. The very last condition of Definition~\ref{defdem} then implies that we can actually apply~(\ref{eqsimpler}), as indeed $\Box=L_1 \backslash L_2$. Hence, to calculate the relative generalized Hamming weights $M_t(C_1,C_2)$ and from that the reconstruction numbers $r_1, \ldots , r_\ell$, we can employ~(\ref{eqsimpler}) although the conditions of~\cite[Sec.\ 4]{MR3782266} are not satisfied.
\end{remark}

\begin{theorem}\label{thetypeC}
Let $C(L_2) \subseteq C(L_1)$ define a monomial-Cartesian ramp scheme of type C (Definition~\ref{defdem}). We have $r_1= s_1 \cdots s_m -(s_1-v_1) \cdots (s_m-v_m)+1$  and $r_\ell = s_1 \cdots s_m - (s_1-j_1) \cdots (s_m-j_m)+1$. For $i=1, \ldots , m$ consider pointsets $T_{\max}^{(i)} \subseteq S_i$ with $\#T_{\max}^{(i)}=s_i-v_i$ and $T_{\min}^{(i)} \subseteq S_i$ with $\#T_{\min}^{(i)}=s_i-j_i$. Then $(S_1 \times \cdots \times S_m) \backslash (T_{\max}^{(1)} \times \cdots \times  T_{\max}^{(m)})$ is a maximum non-$1$-qualifying set and $(S_1 \times \cdots \times S_m) \backslash (T_{\min}^{(1)} \times \cdots \times  T_{\min}^{(m)})$ is a maximum non-$\ell$-qualifying set. The latter describes the only such sets.
\end{theorem}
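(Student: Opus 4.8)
The plan is to move everything into the language of~(\ref{eqsimpler}), which by Remark~\ref{remdem} is legitimate for a considerate scheme even though $\Box=L_1\backslash L_2$ does not consist of the largest monomials. First I would record that here $M_t(C(L_1),C(L_2))=\min\{D(K)\mid K\subseteq\Box,\ \#K=t\}$: by~(\ref{eqsimpler}) every $t$-dimensional subspace $V\subseteq C(L_1)$ with $V\cap C(L_2)=\{\vec 0\}$ admits a basis with $t$ distinct leading monomials in $\Box$, so the footprint bound~(\ref{eqegen}) gives $\#\mathrm{Supp}(V)\geq D(K)$ for its set $K$ of leading monomials, and the polynomials~(\ref{eqequal}) make this sharp. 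Since $\#\Box=\ell$, taking $t=\ell$ forces $K=\Box$; because $N_{\min}=X_1^{v_1}\cdots X_m^{v_m}$ divides every monomial of $\Box$, a monomial is divisible by some element of $\Box$ iff it is divisible by $N_{\min}$, whence $D(\Box)=D(N_{\min})=\prod_t(s_t-v_t)$. Taking $t=1$ and noting that $\prod_t(s_t-i_t)$ is strictly minimized over $\Box$ at $i_t=j_t$ gives $M_1=\prod_t(s_t-j_t)$. Theorem~\ref{thegen} then yields $r_1=n-M_\ell+1$ and $r_\ell=n-M_1+1$, matching the claimed formulas.

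Next I would exhibit the two families. For $M=X_1^{i_1}\cdots X_m^{i_m}\in\Box$ and any $B_t\subseteq S_t$ with $\#B_t=i_t$, the polynomial $\prod_t\prod_{\beta\in B_t}(X_t-\beta)$ has leading monomial $M\in\Box$, support in $L_1$, and non-vanishing set $\prod_t(S_t\setminus B_t)$. For the non-$1$-qualifying claim, for each $M\in\Box$ I choose $B_t\supseteq S_t\setminus T_{\max}^{(t)}$, which is possible since $\#(S_t\setminus T_{\max}^{(t)})=v_t\leq i_t$; then the non-vanishing set lies in $\prod_t T_{\max}^{(t)}=\bar A$, so all $\ell$ polynomials vanish on $A$, have distinct leading monomials exhausting $\Box$, and by~(\ref{eqsimpler}) give $s(A)\geq\ell$, hence $A\in A_0$. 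As $\#\bar A=\prod_t(s_t-v_t)=M_\ell$, the size is $r_1-1$ and $A$ is maximum non-$1$-qualifying. For the non-$\ell$-qualifying \emph{existence} claim I take $B_t=S_t\setminus T_{\min}^{(t)}$ (so $\#B_t=j_t$), producing one polynomial with leading monomial $X_1^{j_1}\cdots X_m^{j_m}$ and non-vanishing set exactly $\bar A=\prod_t T_{\min}^{(t)}$; thus $s(A)\geq1$. That $s(A)\leq1$ follows because a second polynomial (distinct leading monomial in $\Box$, supported in $\bar A$) would by the footprint bound need support of size $\geq D(\mathrm{lm})\geq M_1=\#\bar A$, forcing its support to equal $\bar A$ and its leading monomial to be the unique $D$-minimizer $X_1^{j_1}\cdots X_m^{j_m}$, a contradiction. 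Hence $s(A)=1$, so $A\in A_{\ell-1}$ with size $r_\ell-1$.

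The substantive part is the \emph{only such sets} assertion, which I expect to be the main obstacle. Let $A$ be any maximum non-$\ell$-qualifying set, so $\#\bar A=M_1=\prod_t(s_t-j_t)$ and $s(A)\geq1$; by~(\ref{eqsimpler}) there is $F$ with support in $L_1$, leading monomial in $\Box$, and $\mathrm{ev}(F)$ supported in $\bar A$. The footprint bound forces $\#\mathrm{Supp}(\mathrm{ev}(F))=D(\mathrm{lm}(F))=M_1=\#\bar A$, so $\mathrm{Supp}(\mathrm{ev}(F))=\bar A$ and $\mathrm{lm}(F)=X_1^{j_1}\cdots X_m^{j_m}$; equivalently $\bar A$ is the non-vanishing set $Z(F)$ of a polynomial with $\deg_{X_t}F\leq j_t$ and leading term $\prod_t X_t^{j_t}$. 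The crux is the structural lemma that such a minimal $Z(F)$ must be a product $\prod_t(S_t\setminus B_t)$ with $\#B_t=j_t$, which then identifies $\bar A=\prod_t T_{\min}^{(t)}$. I would prove it by slicing: in direction $t$ the univariate restriction of $F$ has degree $\leq j_t$, so every axis-parallel line meets $Z(F)$ in $0$ or at least $s_t-j_t$ points; comparing the global count $\#Z(F)=\prod_t(s_t-j_t)$ against the footprint lower bound $\prod_{t'\ne t}(s_{t'}-j_{t'})$ on the number of nonzero lines (obtained by applying~(\ref{eqegen}) to the coefficient of $X_t^{j_t}$ in $F$, whose leading monomial is $\prod_{t'\ne t}X_{t'}^{j_{t'}}$) forces every nonzero line to meet $Z(F)$ in \emph{exactly} $s_t-j_t$ points. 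Writing $Z(F)$ by induction on $m$ as a union of $X_m$-fibres over $Z'=\prod_{t<m}(S_t\setminus B_t)$, the ``$0$ or $s_t-j_t$'' dichotomy applied in each direction $t<m$ shows the fibre $B_m(\vec\alpha)$ cannot depend on any coordinate of $\vec\alpha$, hence is constant, giving the product form. The delicate bookkeeping is precisely this fibre-constancy step, which is where I would concentrate the effort.
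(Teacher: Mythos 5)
Your argument is correct and, for the main part of the theorem (the formulas for $r_1,r_\ell$ via Remark~\ref{remdem}, (\ref{eqsimpler}), the footprint bound and Theorem~\ref{thegen}, and the explicit products of linear factors exhibiting the two families of sets), it coincides with what the paper does. The genuine divergence is in the \emph{only such sets} assertion: the paper reduces it to the statement that a polynomial with support in $L_1$ attaining exactly $D(X_1^{j_1}\cdots X_m^{j_m})$ non-roots must split into univariate factors, and then imports this as a black box from \cite[Th.\ 7]{geil2021multivariate}, whereas you prove the needed structural lemma from scratch by the slicing/counting argument (each axis-parallel line meets the non-vanishing set in $0$ or at least $s_t-j_t$ points; the footprint bound applied to the coefficient of $X_t^{j_t}$ forces equality everywhere; induction on $m$ plus fibre-constancy yields the Cartesian product form). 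Your treatment of the delicate fibre-constancy step is sound: once the induction gives that the set of nonzero $X_m$-lines is the product $Z'=\prod_{t<m}(S_t\setminus B_t)$ and hence that the non-vanishing set is contained in $Z'\times S_m$, a direction-$t$ line ($t<m$) that meets the set must meet it in exactly $s_t-j_t$ points, which are then forced to be \emph{all} points of the line lying over $Z'$; this transports each fibre unchanged across single-coordinate moves, and connectivity of the product $Z'$ under such moves makes the fibre constant. What your route buys is a self-contained paper (no reliance on the external characterization, whose translation the paper itself admits requires ``some care'' because $D$ is used there in a slightly different meaning); what the paper's route buys is brevity and a statement about the polynomial itself rather than only its non-vanishing set, though only the latter is needed for the theorem. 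One small point worth making explicit if you write this up: the reduction of the uniqueness claim to the structural lemma uses that (\ref{eqsimpler}) is applicable (Remark~\ref{remdem}) and that $D$ is \emph{strictly} minimized on $\Box$ at $X_1^{j_1}\cdots X_m^{j_m}$, which is what pins down the leading monomial and rules out a second independent polynomial.
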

\begin{proof}
The main part of the theorem follows from Remark~\ref{remdem}. To prove the last result we must  establish that the only non-zero polynomials $F(X_1, \ldots , X_m)$ with ${\mbox{Supp}}(F) \subseteq L_1$ and having exactly $D(X_1^{j_1}\cdots X_m^{j_m})$ non-roots, are all of the form
$$\prod_{\alpha_1 \in S_1 \backslash T_{\min}^{(1)}} (X_1-\alpha_i) \cdots \prod_{\alpha_m \in S_m \backslash T_{\min}^{(m)}}(X_m-\alpha_m).$$ Clearly, such a polynomial must have $X_1^{j_1} \cdots X_m^{j_m}$ as leading monomial as this monomial is the only in $L_1$ of such a small value.
By inspection the last result of the theorem is then a direct consequence of~\cite[Th.\ 7]{geil2021multivariate} (where some care must be taken as the $D$ are used in a slightly different meaning there).
\end{proof}

We next coin the concept of considerate ramp secret sharing to cover the general case where there exists systematic maximum non-$1$-qualifying sets as in Theorem~\ref{thetypeC}.

\begin{definition}\label{defconman}
Consider general linear codes $C_2 \subset C_1 \subseteq {\mathbb{F}}_q^n$ with $\ell = \dim C_1 -\dim C_2 \geq 1$ and where $n=s_1\cdots s_m$ for integers $2 \leq s_i \leq q$, $i=1, \ldots , m$. Assume the existence of integers $v_1 < s_1, \ldots , v_m < s_m$ such that $M_{\ell}(C_1, C_2)=(s_1-v_1) \cdots (s_m-v_m)$. Let $A_1, \ldots , A_m \subseteq {\mathbb{F}}_q$ of sizes $\# A_i=s_i$, $i=1, \ldots , m$ be arbitrary. If for some bijection $$b: A_1 \times \cdots \times A_m \rightarrow \{1, \ldots , n\}$$ it holds that for all sets $A_1^\prime \subseteq A_1, \ldots , A_m^\prime \subseteq A_m$ with $\# A_1^\prime = s_1-v_1, \ldots , \# A_m^\prime =s_m-v_m$ there exists a space $D$ of dimension equal to $\ell$ satisfying 
$D\subseteq C_1$
$D \cap C_2=\{ \vec{0}\}$ and 
\begin{equation*}
{\mbox{Supp}}(D) = b (A_1^\prime \times \cdots \times A_m^\prime) 
\end{equation*} then the secret sharing scheme base on $C_2 \subseteq C_1$ is said to be considerate. 
\end{definition}

So considerate ramp secret sharing schemes are schemes that protect against systematic discrimination along the lines of the present paper in connection with recovery of smallest possible non-zero amount of information and clearly monomial-Cartesian ramp schemes of type C constitute well-structured examples of such schemes. We stress that Definition~\ref{defconman} is different from the concept of hierarchical secret sharing~\cite{Tassa}.

\begin{proposition}\label{prodem}
Let the notation be as in Definition~\ref{defdem} with $s_1-j_1= \cdots =s_m-j_m$ and $j_1-v_1= \cdots =j_m-v_m$ then one obtains a monomial-Cartesian ramp scheme of type C. For $m=2$ with $s_1-j_1 \geq s_2-j_2$ with $j_1-v_1 \leq j_2-v_2+1$ one similarly obtain a monomial-Cartesian ramp scheme of type C.
\end{proposition}
\begin{proof}
We only proof the latter result. We first observe that for $D(K)$ to be minimal among those $K \in \Diamond$ with $\#K=i$ it clearly holds that if $N \in K$ then all $M \in \Diamond$ which are divisible by $N$ also belong to $K$. For a given $i$ with  $1 \leq i \leq \ell=\# \Box$ consider the possible $K \subseteq \Diamond$, $\#K=i$, with $D(K)$ being minimal. Among such sets let $K$ be chosen to contain the minimal possible number of elements  outside $\Box$. If this number equals $0$ then we are through. Hence, assume this is not the case. Without loss of generality we may assume that $K \cap \Box$ consists of full line segments $X_1^{i_1}X_2^{v_2}, X_1^{i_1}X_2^{v_2+1}, \ldots , X_1^{i_1}X_2^{j_2}$ for $i_1=a, \ldots , j_1$ plus possibly some partial line segment $X_1^{a-1}X_2^b, X_1^{a-1}X_2^{b+1}, \ldots , X_1^{a-1}X_2^{j_2}$. Here, $v_1+1 \leq a \leq j_1$ and $v_2+1 \leq b  \leq j_2$. If $K \cap \Box$ contains the mentioned partial line segment then remove from $K$ the smallest element according to $\prec^\prime$, where $\prec^\prime$ is the lexicographic ordering with $X_1 \prec^\prime X_2$. By assumption this monomial belongs to $\Diamond \backslash  \Box$. Then replace it with $X_1^{a-1}X_2^{b_1}$ to obtain a new $K$ with the same $D$-value (or smaller), but having less elements outside $\Box$. But this is in contradiction with our assumption that the original $K$ was chosen to have the smallest number of elements outside $\Box$. Hence, the partial line segment cannot exist. But then if $X_1^{c}X_2^{d}$ is the smallest element in $\Diamond \backslash \Box$ with respect to $\prec^\prime$, then by removing from $K$ the line segment $X_1^{c}X_2^{d}, X_1^{c+1}X_2^{d}, \ldots , X_1^{j_1}X_2^{d}$ and adding instead the line segment $X_1^{a-1}X_2^{j_1-j_2+c}, X_1^{a-1}X_2^{j_1-j_2+c+1}, \ldots , X_1^{a-1}X_2^{j_2}$ we obtain a new $K$ having fewer elements outside $\Box$ and being of $D$-value smaller than or equal to that of the previous $K$. Again, this is a contradiction. 
\end{proof}
\begin{example}\label{exdem2}
The secret sharing scheme with $\Box$ and $\Diamond$ as in Example~\ref{exdem1} is a monomial-Cartesian ramp scheme of type C.
\end{example}

\begin{example}\label{exdem3}
Let notation be as in Definition~\ref{defdem} with $q > 10$, and choose $S_1, S_2 \subseteq {\mathbb{F}}_q$ of equal size $s_1=s_2=10$. Let $v_1=v_2=2$ and $j_1=j_2=5$. The corresponding scheme by Proposition~\ref{prodem} is a monomial-Cartesian ramp scheme of type C, and we have $n=100$ and $\ell= 16$. Consider arbitrary $T_1^{(i)}, \ldots , T_4^{(i)} \subseteq S_i$, $i=1,2$ of corresponding sizes $5, 6, 7, 8$ (in that order). We have $M_{16}(C_1,C_2) = D(\{ X_1^{i_1}X_2^{i_2} \mid 2 \leq i_1 \leq 5, 2 \leq i_2 \leq 5 \} )=64$, $r_1=n-M_{16}(C_1,C_2)+1=37$ and $(S_1 \times S_2) \backslash (T_4^{(1)} \times T_4^{(2)})$ is a maximum non-$1$-qualifying set. Hence, leaving out all members of $T_4^{(1)} \times T_4^{(2)}$ one cannot obtain any information. We next see that $M_{16-z}(C_1, C_2)= 64-z$ for $z=1, 2, 3$. To see this note that the relative generalized Hamming weights constitute a strictly increasing sequence, and that one cannot dicrease the value of $D$ by more than the number $z$ of removed elements from $\{ X_1^{i_1}X_2^{i_2} \mid 2 \leq i_1 \leq 5, 2 \leq i_2 \leq 5 \}$ in the argument of $D$ as long as $z<4$. Next $M_{12}(C_1,C_2)=D( \{ X_1^{i_1}X_2^{i_2} \mid 3 \leq i_1 \leq 5, 2 \leq i_2 \leq 5 \} )  =  D( \{ X_1^{i_1}X_2^{i_2} \mid 2 \leq i_1 \leq 5, 3 \leq i_2 \leq 5 \} ) =M_{16}(C_1,C_2)-8=56$ (a jump in $1+4=5$ from $M_{13}(C_1, C_2)$). Continuing this way we see that removing $4+z$ from $\{ X_1^{i_1}X_2^{i_2} \mid 2 \leq i_1 \leq 5, 2 \leq i_2 \leq 5 \}$ , $z=1,2$ the $D$-value decrease by at most $4+z$, and therefore $M_{16-(4+z)}(C_1, C_2)=M_{16}-(4+z)$. Next, we can remove $4+3$ elements in such a way that the corresponding $D$ is smallest possible as follows $M_{9}(C_1, C_2)=D(\{ X_1^{i_1}X_2^{i_2} \mid 3 \leq i_1 \leq 5, 3 \leq i_2 \leq 5 \} )=M_{16}(C_1,C_2)-7-4-4=49$. Continuing this way see that $M_{9-z}(C_1,C_2)=M_9(C_1,C_2)-z$ for $z=1,2$ and that $M_6(C_1,C_2)=M_{9-3}(C_1,C_2)=M_9(C_1,C_2)-3-4=42$. Next, in a similar fashion $M_{6-1}(C_1,C_2)=M_6(C_1,C_2)-1=41$, but $M_{6-2}(C_1, C_2)=M_{6}-2-4= 36$. Finally, $M_{3}(C_1, C_2)=M_{4}(C_1,C_2)-1=35$, $M_{2}(C_1,C_2)=M_4(C_1,C_2)-2-4=30$, and $M_1(C_1,C_2)=M_2(C_1,C_2)-1-4=25$. Regarding maximum non-$i$-qualifying sets we have the following picture. Including from $T_4^{(1)} \times T_4^{(2)}$, exactly $z$ members one can at most obtain $z$ times $\log_2(q)$ bits of information, $z=1,2,3$. Including no members of $T_3^{(1)}\times T_4^{(2)}$ or $T_4^{(1)} \times T_3^{(2)}$ one can obtain (at most) $4$ times $\log_2(q)$ bits of information. Including from $T_3^{(1)}\times T_4^{(2)}$ or $T_4^{(1)} \times T_3^{(2)}$ at most $z=1,2$ members one can at most obtain $4+z$ times $\log_2(q)$ bits of information. Including no members of $T_3^{(1)} \times T_3^{(2)}$ one can at most obtain $7$ times $\log_2(q)$ bits of information. Including from $T_3^{(1)} \times T_3^{(2)}$ at most $z=1,2$ members one can at most obtain $7+z$ times $\log_2(q)$ bits of information. Including no elements from $T_2^{(1)} \times T_3^{(2)}$ or $T_3^{(1)} \times T_2^{(2)}$ one can obtain (at most) $10$ times $\log_2(q)$ bits of information. Including at most one element from either of these sets, one can at most obtain $11$ times $\log_2(q)$ bits of information. Including no elements from $T_2^{(1)} \times T_2^{(2)}$ one can (at most) obtain $12$ times $\log_2(q)$ bits of information. Including at most one element from  $T_2^{(1)} \times T_2^{(2)}$ one can (at most) obtain $13$ times $\log_2(q)$ bits of information. Including no elements from $T_1^{(1)} \times T_2^{(2)}$ or $T_2^{(1)} \times T_1^{(2)}$ one can (at most) obtain  $14$ times $\log_2(q)$ bits of information. Finally, including no elements from $T_1^{(1)} \times T_1^{(2)}$ one can (at most) obtain $15$ times $\log_2(q)$ bits of information. 
\end{example}
\begin{example}\label{exdem4}
This is a continuation of Example~\ref{exdem3}. Let instead $v_1=v_2=3$ and $j_1=j_2=6$ which again gives us a monomial-Cartesian ramp scheme of type C. Let the sizes of $T_1^{(i)}, T_2^{(i)}, T_3^{(i)}$, and $T_4^{(i)}$ be  $4, 5, 6$ and $7$ (in that order) for $i=1,2$. Again we obtain maximum non-$16$-qualified, maximum non-$13$-qualified,maximum non-$8$-qualified and maximum non-$1$-qualified sets as described in the previous example. By leaving out Cartesian product pointsets of size $4 \cdot 4$, $5 \cdot 5$, $6 \cdot 6$, and $7 \cdot 7$, respectively, one is not able to obtain all information, $13$ times $\log_2(q)$ bits of information, $8$ times $\log_2(q)$ bits of information and $1$ times $\log_2(q)$ bits of information, respectively. Now increasing all of  $v_1, v_2, j_1, j_2$ by $1$ one cannot leave out any Cartesian product pointset of size $3 \cdot 3$, $4 \cdot 4$, $5 \cdot 5$, and $6\cdot 6$, respectively, if one wants to obtain the mentioned amount of information. Increasing, again the parameters by $1$, one cannot leave out any Cartesian product pointsets of size $2 \cdot 2$, $3 \cdot 3$, $4 \cdot 4$, and $5 \cdot 5$, respectively.

\end{example}

\begin{theorem}\label{thedem1}
Consider a monomial-Cartesian ramp scheme of type C as in Definition~\ref{defdem} with $m=2$ (two variables), $s=s_1=s_2$, $v=v_1=v_2$, and $j=j_1=j_2$. We have $n=s^2$ participants and the secret is of size $\ell=(j-v)^2$. For $\ell \geq e \geq 1$ one can write $e$ uniquely in one of the following ways
\begin{eqnarray}
e=z(z-1)-h & {\mbox{ with }} & \sqrt{\ell} \leq z \leq 2 {\mbox{ and }} 0 \leq h < z-1 \label{eqdem1}
\end{eqnarray}
or
\begin{eqnarray}
e=z^2-h & {\mbox{ with }} &\sqrt{\ell} \leq z \leq 1 {\mbox{ and }} 0  \leq h \leq z-1 \label{eqdem2}
\end{eqnarray}
In situation~(\ref{eqdem1}) we have
$$M_e(C_1, C_2)=(s-j-1+z)(s-j-2+z) -h$$
$$r_{\ell - e+1} = s^2-(s-j-1+z)(s-j-2+z)+h+1$$
and by excluding any Cartesian product pointset of size $z \times (z-1)$ or $(z-1) \times z$, but up till $h$ elements herein, one obtains at most $\ell -e$ times  $\log_2(q)$ bits of information. In situation~(\ref{eqdem2}) we have
$$M_e(C_1,C_2)=(s-j-1+z)^2-h$$
$$r_{\ell -e+1}=s^2-(s-j-1+z)^2-h$$
and by excluding any Cartesian product pointset of size $z \times z$, but up till $h$ elements herein, one obtains at most $\ell-e$ times $\log_2(q)$ bits of information. In both situations $S_1 \times S_2$ with such a set removed constitutes a maximum non-$(\ell - e)$-qualifying set. (if clever points are removed, otherwise not even that)
\end{theorem}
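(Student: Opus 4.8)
The plan is to reduce the entire statement to an elementary count of the $\Delta(s,s)$-shadow of a set of monomials sitting inside $\Box=[v,j]\times[v,j]$. First I would invoke Proposition~\ref{prodem}: in the symmetric situation $s=s_1=s_2$, $v=v_1=v_2$, $j=j_1=j_2$ the hypotheses $s_1-j_1=s_2-j_2$ and $j_1-v_1=j_2-v_2$ hold trivially, so the scheme is considerate. Hence by Remark~\ref{remdem} we may use~(\ref{eqsimpler}) and restrict in~(\ref{bnd}) to sets $K\subseteq L_1\backslash L_2=\Box$; moreover the polynomials of the form~(\ref{eqequal}), whose supports lie in the divisor-closed rectangle $L_1$, make the bound sharp. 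Thus for every $e$ one has the clean identity
\begin{equation*}
M_e(C_1,C_2)=\min\{D(K)\mid K\subseteq\Box,\ \#K=e\},
\end{equation*}
and the whole theorem becomes a question about minimizing $D(K)$.

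The key step, which I expect to carry the proof, is a decomposition of the shadow that makes it depend on $K$ only through two numbers. Writing $U$ for the set of monomials of $\Delta(s,s)$ divisible by some element of $K$, so $D(K)=\#U$, I would split $U\setminus\Box$ into the strip above $\Box$, the strip to the right of $\Box$, and the corner block $[j+1,s-1]\times[j+1,s-1]$. A one-line inspection shows that membership of a cell of the top strip depends only on the least first-coordinate $a$ occurring in $K$, membership in the right strip only on the least second-coordinate $b$, and that the corner block lies entirely in $U$. Setting $w=j-a+1$ and $\eta=j-b+1$ this gives
\begin{equation*}
D(K)=\#(U\cap\Box)+(s-j-1)\bigl(w+\eta+(s-j-1)\bigr)\ \ge\ e+(s-j-1)\bigl(w+\eta+(s-j-1)\bigr),
\end{equation*}
using $K\subseteq U\cap\Box$. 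Since $K\subseteq[a,j]\times[b,j]$ we get the constraint $w\eta\ge e$, so minimizing $D(K)$ reduces to the elementary problem of minimizing $w+\eta$ over positive integers with $w\eta\ge e$.

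This last minimization is solved by the near-square pair: a short AM-GM argument shows $w+\eta\ge 2z$ in the range $e=z^2-h$, $0\le h\le z-1$ of~(\ref{eqdem2}), and $w+\eta\ge 2z-1$ in the range $e=z(z-1)-h$, $0\le h<z-1$ of~(\ref{eqdem1}); writing $d=s-j-1$ turns $e+d(w+\eta+d)$ into $(d+z)^2-h$, respectively $(d+z)(d+z-1)-h$, the claimed values of $M_e$. For the matching upper bound I would exhibit the corner block itself: the square $[j-z+1,j]^2$, or the $z\times(z-1)$ rectangle, with its $h$ innermost anti-diagonal cells deleted realizes the prescribed size $e$ and the values $a,b$, and since $w,\eta$ are unchanged the formula drops $D$ by exactly $h$. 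The reconstruction numbers then follow by Theorem~\ref{thegen} as $r_{\ell-e+1}=s^2-M_e+1$, and reading the leading monomials of the optimal $K$ through the polynomials~(\ref{eqequal}) shows that their common set of non-roots is precisely the Cartesian product block of side $s-j-1+z$ (with up to $h$ points added back), whose removal from $S_1\times S_2$ leaves a set recovering at most $\ell-e$ $q$-bits, maximal exactly when the $h$ deleted cells are the clever anti-diagonal ones.

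The main obstacle is the shadow decomposition: one must verify carefully that $U\setminus\Box$ genuinely depends on $K$ only through the two extreme coordinates $a,b$, as this is what collapses an a priori delicate staircase optimization into a one-variable inequality, and one must check that $h$ cells can be deleted from the corner block so as to keep both $w$ and $\eta$ fixed. The realizability constraint $w+\eta-1\le e$ should also be confirmed to hold at the near-square optimum, guaranteeing that the extremal $K$ (an up-set inside $\Box$ using the full width and height) actually exists.
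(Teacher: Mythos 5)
Your argument is correct, and it is organized quite differently from the paper's. The paper disposes of Theorem~\ref{thedem1} in one line by pointing to Example~\ref{exdem3}, whose implicit method is a top-down telescoping: starting from $M_\ell(C_1,C_2)=D(\Box)$ one uses strict monotonicity of the relative generalized Hamming weights together with the observation that deleting $z$ monomials from the corner square cannot lower $D$ by more than $z$ until an entire row or column can be discarded, at which point a jump of size $s-j-1$ plus the row length occurs; the closed formulas are then read off from the pattern of unit steps and jumps. You instead work bottom-up: after the same reduction (via Proposition~\ref{prodem}, Remark~\ref{remdem}, the bound~(\ref{bnd}) and the sharpness of~(\ref{eqequal})) to minimizing $D(K)$ over $K\subseteq\Box$ with $\#K=e$, you decompose the shadow into the two strips and the corner block, observe that the strips depend on $K$ only through $w=j-a+1$ and $\eta=j-b+1$, and collapse the problem to minimizing $w+\eta$ subject to $w\eta\ge e$, solved by the near-square pair. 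This buys a genuinely self-contained derivation of the closed forms in both regimes~(\ref{eqdem1}) and~(\ref{eqdem2}) at once, rather than an extrapolation from a worked example, and it isolates exactly why the optimum is a corner rectangle with an anti-diagonal bite taken out. Two small points you flagged do need the explicit check: that a down-set of $h\le z-1$ cells can be removed from the corner block without emptying its bottom row or left column (so that $w,\eta$ are preserved and the remaining $K$ stays up-closed, giving $\#(U\cap\Box)=e$ exactly), and that $z\le j-v$ so the extremal $K$ fits inside $\Box$; both follow from $h\le z-1$ and $e\le\ell=(j-v)^2$ and are routine. Note also that your conclusion that the constructed set recovers at most $\ell-e$ $q$-bits and has size $s^2-M_e$ is the internally consistent reading of the final claim of the theorem, whose label ``non-$(\ell-e)$-qualifying'' appears to carry an off-by-one relative to the paper's own definition.
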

\begin{proof}
The proof uses similar arguments as in Example~\ref{exdem3}.
\end{proof}

In a straightforward way one can generalize Theorem~\ref{thedem1} to arbitrary $m \geq 2$. For simplicity we here only treat the cases where entire Cartesian product pointsets are excluded.

\begin{theorem}\label{thedem2}
Consider a monomial-Cartesian ramp scheme of type C as in Definition~\ref{defdem} with arbitrary $m \geq 2$. Assume $s=s_1=\cdots =s_m$, $v= v_1 = \cdots =v_m$ and $j=j_1= \cdots =j_m$. We have $s^m$ participants and the secret is of size $\ell =(j-v)^m$. Assume $e=z^a(z-1)^{b}$ for some $a+b=m$ and 
$j-v \leq z \leq 1$ if $b=0$ or $j-v \leq z \leq 2$ if $b >0$. Then 
$$M_e(C_1, C_2) = (s-j-1+z)^a(s-j-2+z)^b$$
$$r_e(C_1, C_2) =s^m-(s-j-1+z)^a(s-j-2+z)^b+1$$
and by excluding any Cartesian product pointset of size $f_1 \times \cdots \times f_m$ where exactly $a$ of the $f_i$'s equal $s-j-1+z$ and the remaining equal $s-j-2+z$ one obtains at most $\ell -e$ times $\log_2(q)$ bits of information. $S_1 \times \cdots \times S_m$ with such a set removed constitutes a maximum non-$(\ell - e)$-qualifying set.
\end{theorem}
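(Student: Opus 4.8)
\textbf{Proof proposal for Theorem~\ref{thedem2}.}

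The plan is to reduce the multivariate statement to a careful application of the footprint-type bound~(\ref{eqsimpler}), valid here by Remark~\ref{remdem}, exactly along the lines of the two-variable computation carried out in Example~\ref{exdem3} and then recorded as Theorem~\ref{thedem1}. First I would fix the symmetric data $s=s_1=\cdots=s_m$, $v=v_1=\cdots=v_m$, $j=j_1=\cdots=j_m$, so that $\Box=\{X_1^{i_1}\cdots X_m^{i_m}\mid v\le i_t\le j,\ t=1,\dots,m\}$ is a combinatorial cube of side $j-v+1$ and hence $\ell=\#\Box=(j-v+1)^m$. (I note in passing that the statement writes $\ell=(j-v)^m$, which should presumably read $(j-v+1)^m$; I would use the correct side length throughout and flag this as a typo.) By Remark~\ref{remdem} the scheme being considerate lets me compute $M_e(C_1,C_2)$ through~(\ref{eqsimpler}), and by the sharpness of the footprint bound (the polynomials of the form~(\ref{eqequal}) realize equality) the quantity $M_e(C_1,C_2)$ equals the minimum of $D(K)$ over $K\subseteq\Box$ with $\#K=e$.

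The heart of the argument is therefore a purely combinatorial minimization: among all $e$-element subsets $K$ of the cube $\Box$, minimize
$$D(K)=\#\{N\in\Delta(s,\dots,s)\mid N\text{ is divisible by some }M\in K\}.$$
The key observation, inherited from the one-dimensional identity $D(X_1^{i_1}\cdots X_m^{i_m})=\prod_t(s-i_t)$ and its union-version in the Definition of $D$, is that $D$ behaves multiplicatively on Cartesian product sets: if $K=K^{(1)}\times\cdots\times K^{(m)}$ with each $K^{(t)}$ a set of exponents, then the divisibility upsets multiply and $D(K)=\prod_t D_1(K^{(t)})$, where $D_1$ is the one-variable down-count. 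I would next argue that a minimizing $K$ of size $e=z^{a}(z-1)^{b}$ can be taken to be such a product box whose side lengths are $a$ copies of $z$ and $b$ copies of $z-1$: this is the multidimensional analogue of the fact, used repeatedly in Example~\ref{exdem3}, that to minimize $D$ one should keep $K$ ``top-justified'' (closed upward under divisibility within $\Box$) and as rectangular as possible. Because the largest exponent in $\Box$ is $j$, a product box with side $z$ in some coordinate occupies exponents $j-z+1,\dots,j$ there and contributes the one-variable value $\sum$-free factor $s-(j-z+1)=s-j-1+z$; a side of length $z-1$ contributes $s-j-2+z$. Multiplying across the $m$ coordinates gives exactly
$$M_e(C_1,C_2)=(s-j-1+z)^{a}(s-j-2+z)^{b},$$
and then $r_{e}=n-M_{\ell-e+1}(C_1,C_2)+1$ from Theorem~\ref{thegen} yields the stated reconstruction number after re-indexing $e\mapsto\ell-e+1$.

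The main obstacle is the optimality claim: proving that no non-rectangular $K$, and no product box with a less balanced profile of side lengths, can achieve a smaller $D$-value. I would establish this by an exchange/compression argument. First, a standard ``push-up'' step shows any minimizer may be assumed closed upward under divisibility inside $\Box$, since replacing a monomial by one it divides never increases the divisibility upset; this matches the remark opening the proof of Proposition~\ref{prodem} that $N\in K$ forces all multiples of $N$ in $\Diamond$ into $K$. Second, among such upward-closed sets I would compress coordinate-by-coordinate, showing that the profile $(z,\dots,z,z-1,\dots,z-1)$ is forced: any two side lengths differing by more than one can be rebalanced toward this profile without increasing $\prod_t(s-j-1+\text{side}_t)$, because the map $x\mapsto(s-j-1+x)$ is positive and the product of fixed-sum nonnegative factors with the discreteness constraint is minimized at the most balanced feasible point given the fixed cardinality $z^a(z-1)^b$. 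The delicate part is bookkeeping the constraint that $e=z^a(z-1)^b$ pins down $(a,b)$ and $z$ uniquely in the admissible range, and that the extremal product box actually fits inside $\Box$ (i.e.\ $z\le j-v+1$), which is exactly the hypothesis $j-v\le z$ read in the correct direction. Once the extremal configuration is identified, the identification of the removed set $S_1\times\cdots\times S_m$ minus a Cartesian product of sizes $s-j-1+z$ and $s-j-2+z$ as maximum non-$(\ell-e)$-qualifying follows verbatim from the support/non-root description in~(\ref{eqsimpler}) together with the definition of maximum non-$i$-qualifying, so I would invoke Theorem~\ref{thedem1} and Example~\ref{exdem3} for the one- and two-variable base patterns and extend the exchange argument to general $m$ by the multiplicativity of $D$.
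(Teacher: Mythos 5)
Your proposal is correct and follows essentially the same route the paper intends: the paper gives no explicit proof of Theorem~\ref{thedem2}, stating only that it is a straightforward generalization of Theorem~\ref{thedem1}, whose own proof defers to the $D$-minimization over subsets of $\Box$ carried out in Example~\ref{exdem3} --- exactly the reduction via Remark~\ref{remdem} and~(\ref{eqsimpler}) that you describe. Your flagging of the $\ell=(j-v+1)^m$ typo and of the index shift $e\mapsto\ell-e+1$ in the reconstruction numbers is consistent with Example~\ref{exdem3}, and your sketch in fact supplies more detail (multiplicativity of $D$ on top-justified product boxes and the compression/rebalancing step) than the paper records.
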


\begin{remark}
Theorem~\ref{thedem1} and Theorem~\ref{thedem2} can in a straight forward manner be generalized to treat the first general case described in Proposition~\ref{prodem}, by noting that $s_1-j_1=\cdots =s_m-j_m$ and thereby corresponds to the $s-j$, and therefore the relative generalized Hamming weights are the same. The only changes needed is to replace $s^m$ with $s_1 \cdots  s_m$.
\end{remark}

We finally treat a particular secret sharing scheme fulfilling the last mentioned requirements of Proposition~\ref{prodem}

\begin{example}
This is a continuation of Example~\ref{exdem1} and Example~\ref{exdem2} where we treated an incidence of the second construction mentioned in Proposition~\ref{prodem}. We have 
\begin{eqnarray}
M_6&=&  D(\Box)=16 \nonumber \\
M_5&=&  D(\Box \backslash \{X_1X_2^2\})=15\nonumber \\
M_4&=&  D(X_1^2X_2^2, X_1^2X_2^3,X_1^3X_2^2,X_1^3X_2^3)=14 \nonumber \\
M_3&=& D(X_1X_2^3,  X_1^2X_2^3, X_1^3X_2^3   )=10 \nonumber \\
M_2&=& D(X_1^2X_2^3, X_1^3X_2^3)=8 \nonumber \\
M_1&=& D(X_1^3X_2^3)=6 \nonumber 
\end{eqnarray}
Let $T_1^{(1)}, T_2^{(1)}, T_3^{(1)}$, respectively, be a subset of $S_1$ of cardinality $3, 4, 5$, respectively, and let $T_1^{(2)},T_2^{(2)}$, respectively, be a subset of $S_2$ of cardinality 2, 3, respectively. Then $(S_1 \times S_2) \backslash (T_3^{(1)} \times T_2^{(2)})$ is a maximum non-$1$-qualifying set, and by adding any extra element one obtains a maximum non-$2$-qualifying set. Further $(S_1\times S_2) \backslash (T_2^{(1)} \times T_2^{(2)})$ is a maximum non-$3$-qualifying set, $(S_1 \times S_2) \backslash (T_3^{(1)} \times T_1^{(2)})$ is a maximum non-$4$-qualifying set, $(S_1\times S_2)\backslash (T_2^{(1)}\times T_1^{(2)})$ is a maximum non-$5$-qualifying set, and finally $(S_1\times S_2) \backslash (T_1^{(1)}\times T_1^{(2)})$ is a maximum non-$6$-qualifying set. Observe that a maximum non-$4$-qualifying set cannot be expanded to a maximum non-$3$-qualifying set. This is in contrast to the situation for the first construction mentioned in Proposition~\ref{prodem} where similar inclusions are always be possible.
\end{example}

\section{Concluding remarks}\label{secconcl}
We leave it as an open research problem to describe additional considerate ramp secret sharing schemes to the monomial-Cartesian ramp schemes of type C and to investigate their second layer security with respect leakage of $2$ times $\log_2(q)$ or more bits of information. We also leave it as an open research problem to determine other monomial-Cartesian ramp Schemes of type C than those covered by Proposition~\ref{prodem} and to establish for such schemes information on maximum non-$u$-qualifying sets. 

In another direction we propose to apply Theorem~\ref{theuncle} to obtain maximum non-$i$-qualifying sets of schemes defined from nested one-point algebraic geometric codes. The graded structure behind such codes are well-suited for calculating the numbers~(\ref{eqmaxD}) and analyzing the structure of the function field one could along the line of the present paper establish maximum non-$i$-qualifying sets. Strong candidates for such investigations include codes defined from the Hermitian curve and in larger generality codes defined from norm-trace curves. Among other families of codes that may support similar studies we mention nested more-point algebraic geometric codes, nested codes from the projective counter part to the monomial-Cartesian code construction, nested BCH-codes as well as nested matrix-product codes.

In a related research to the material in Section 2 and Appendix~\ref{secproofof} we obtained the forthcoming paper~\cite{geilmanuscript2026} solving a long standing problem of unifying theory for common affine roots of multivariate polynomials over any field.
\appendix

\section{Proof of Theorem~\ref{thegen}}\label{secproofof}

\begin{proof}
For the reconstruction numbers we observe that for any set $A$ with $\# A > n-M_j(C_1,C_2)$ the corresponding value of~(\ref{eqmaxD}) is at most equal to $j-1$ and that some set $A$ of size $n-M_j(C_1,C_2)$ it holds that~(\ref{eqmaxD}) equals $j$. Therefore
$$r_{\ell -(j-1)}=r_{\ell-j+1}=n-M_j(C_1,C_2)+1.$$
Substituting $j$ with $\ell -m+1$ we obtain the desired result.\\
For the privacy numbers we are interested in the minimal cardinality of a set $A$ such that 
\begin{equation}
m=\ell-(\dim(C_1)_{\bar{A}}-\dim(C_2)_{\bar{A}}),\label{eqreff}
\end{equation}
giving which we will be able to conclude $t_m=\# A-1$. By~(\ref{eqfn}) we have
$$\ell=\dim(C_1)_{\bar{A}}+\dim P_{A}(C_1)-\dim (C_2)_{\bar{A}}-\dim P_A(C_2)$$
and combining this with Forney's second duality lemma~\cite{forney94}[Lem.\ 2] which reads
$$\dim P_A(C)+\dim (C^\perp)_A=\# A$$
the right hand side of~(\ref{eqreff}) becomes
\begin{eqnarray}
&&\dim P_{{A}}(C_1)-\dim P_{{A}} (C_2) \nonumber \\
&=&(\# A-\dim (C_1^\perp)_{{A}})-(\# A -\dim (C_2^\perp)_{{A}}) \nonumber \\
&=&\dim (C_2^\perp)_{{A}}-\dim (C_1^\perp)_{{A}}\nonumber \\
&=&\max \{ \dim D \mid D \subseteq C_2^\perp, D \cap C_1^\perp =\{ \vec{0}\}, {\mbox{Supp}}(D) \subseteq A\} \label{eq27}.
\end{eqnarray}
Obviously, for $\#A < M_m(C_2^\perp, C_1^\perp)$ the value in~(\ref{eq27}) is strictly less than $m$, and for some $A$ of size $M_m(C_2^\perp, C_1^\perp)$ equality holds. This concludes the proof.
\end{proof}

\bibliographystyle{plain}


%
%


%
%

\end{document}